\newcommand{\M}{{\cal M}}
\newcommand{\PP}{{\cal P}}
\newtheorem{lemma}{Lemma}
\begin{document}
\title{Using Graph Partitioning for Scalable Distributed Quantum Molecular Dynamics}
\author{Hristo N.\ Djidjev\footnote{Los Alamos National Laboratory, Los Alamos, NM 87544, USA} \and Georg Hahn\footnote{Lancaster University, Bailrigg, Lancaster LA1 4YW, U.K.} \and Susan M.\ Mniszewski\footnotemark[1] \and Christian F.A.\ Negre\footnotemark[1] \and Anders M.N.\ Niklasson\footnotemark[1]}
\date{}
\maketitle

\begin{abstract}
The simulation of the physical movement of multi-body systems at an atomistic level, with forces calculated from a quantum mechanical description of the electrons, motivates a graph partitioning problem studied in this article. Several advanced algorithms relying on evaluations of matrix polynomials have been published in the literature for such simulations. We aim to use a special type of graph partitioning in order to efficiently parallelize these computations. For this, we create a graph representing the zero-nonzero structure of a thresholded density matrix, and partition that graph into several components. Each separate submatrix (corresponding to each subgraph) is then substituted into the matrix polynomial, and the result for the full matrix polynomial is reassembled at the end from the individual polynomials. This paper starts by introducing a rigorous definition as well as a mathematical justification of this partitioning problem. We assess the performance of several methods to compute graph partitions with respect to both the quality of the partitioning and their runtime.
\end{abstract}

\section{Introduction}
\label{section_intro}
The physical movements of multi-body systems on an atomistic level is at the core of molecular dynamics (MD) simulations. Those dynamics take place at the femtosecond ($10^{-15}$ second) time scale and they are incorporated in a larger simulation which typically is of the order of pico- to nanoseconds ($10^{-12}$ to $10^{-9}$ second). A simple way to conduct MD simulations is to derive all forces from the potential energy surface for all interacting particles, and to compute molecular trajectories for the multi-particle system by solving Newton's equations numerically. In quantum-based molecular dynamics (QMD) simulations, the electronic structure is based on an underlying quantum mechanical description, from which interatomic forces are calculated.

Several QMD methods are published in the literature for a variety of materials systems. So-called \textit{first principle} methods are capable of simulating a few hundred atoms over a picosecond range. Important approaches of this type include Hartree-Fock or density functional theory. Semiempirical methods such as self-consistent tight-binding techniques increase the applicability to systems of several thousand atoms. In contrast to regular first principle methods, approximate methods are often two to three orders of magnitude faster while still capturing the quantum mechanical behavior (for example, charge transfer, bond formation, excitations, and quantum size effects) of the system.

One of the most efficient and widely used methods is density functional based self-consistent tight-binding theory \citep{Elstner1998, MFinnis98, TFrauenheim00}. In this approach, the main computational effort stems from the diagonalization of a matrix, the so-called \textit{Hamiltonian matrix}, which encodes the electronic energy of the system. The Hamiltonian matrix is needed in order to construct the \textit{density matrix} describing the electronic structure of the system. Before evaluating the forces at each time step of a QMD simulation, the self-consistent construction of the density marix is carried out. Computing the density matrix requires matrix diagonalization with a computational cost of $O(N^3)$, where $N$ is the dimension of the Hamiltonian matrix. This makes diagonalization only viable for small system sizes. For this reason, the last two decades have seen the development of a number of linear runtime (i.e., $O(N)$) algorithms.

One such linear runtime approach is based on a recursive polynomial expansion of the density matrix \citep{Niklasson2002}. A linear scaling with the system size for non-metallic systems is achieved by the sparse-matrix second-order spectral projection (SM-SP2) algorithm. On dense or sparse matrices, SM-SP2 competes with or outperforms regular diagonalization schemes with respect to both speed and accuracy \citep{Mniszewski2015}. SM-SP2 uses the expression
\begin{equation}
  D=\lim\limits_{i \rightarrow \infty } f_i[f_{i-1}[\dots f_0[X_0]\dots]]
  \label{eq:expansion}
\end{equation}
to compute the density matrix $D$ from the Hamiltonian $H$, where $f_i(X_i)$ is a quadratic function (either $X_i^2$ or $2X_i-X_i^2$, depending on $Tr(X_i)$ or $Tr(X_{i+1})$), and the initial matrix $X_0$ is a linearly modified version of $H$. Usually, $20$-$30$ iterations suffice to obtain a close approximation of $D$. Additionally, thresholding is applied to further reduce the computational complexity, where small nonzero elements of the matrix (typically between $10^{-5}$ to $10^{-7}$) are set to zero. Thresholding the matrix elements of the Hamiltonian is essential in order to arrive at a fast way to calculate the density matrix: without thresholding, the resulting matrix would quickly become dense due to fill-in.

The cost of computing a matrix polynomial $P$ (which is mainly due to the squaring of a matrix) dominates the computational cost of the SM-SP2 algorithm. Since we are interested in performing a large number of time steps (of the order of $10^4$-$10^6$) in a typical QMD simulation, we need to parallelize the evaluation of the matrix polynomials in order to keep the wall-clock time low. However, the significant communication overhead for every iteration causes SM-SP2 to not parallelize well. Linear scaling complexity has been achieved with thresholded blocked sparse matrix algebra \citep{Bock,Hutter,Mniszewski2015,VandeVondele}. In our paper, we present an alternative formulation that reduces communication overhead via graph partitioning and enables scalable parallelism. Our basic approach was introduced in \cite{Niklasson2016}, but with the main focus being on the physics aspects.

This paper addresses the aforementioned parallel version of SP2 together with an inbuilt partitioning scheme applied to the graph representation of the density matrix, denoted as G-SP2 in the remainder of the article. In particular, the computational aspects of evaluating the matrix polynomial in G-SP2 are investigated. We represent the Hamiltonian (or density) matrix as a graph where atomic orbitals are given as vertices and non-zero interactions become edges, and then partition that graph into blocks (parts) with the aim to minimize a suitable cost function. The submatrices of the Hamiltonian or density matrix correspond to the divisions of the molecule, and are derived from the graph partition. We rigorously prove that applying the matrix polynomial to the entire Hamiltonian is equivalent to applying it to the partitioned Hamiltonian matrix and re-assembling the partial solutions, thus justifying our approach to parallelize the computational workload via graph partitioning. Although demonstrated numerically \citep{Niklasson2016}, no such proof exists in the literature to the best of our knowledge.

Our algorithm is related to the general framework of \cite{PinarHendrickson2001} (specifically the case of \textit{overlapped subdomains}), who present a general methodology to partition workloads among several processors. Their work is similar in that standard graph partitionings are considered, which are then locally modified to further optimize them (we attempt the same with a simulated annealing approach in Section~\ref{section_SA}). However, there are crucial differences to the work presented in this article: Since their framework is very general, it does not consider our specific objective function presented later in Section~\ref{section_theory}, their general algorithm works with a distance cutoff for modifying partitions which we do not recommend, and most importantly, we prove that for our problem, the partitioned matrix polynomial indeed allows lossless parallelization.

We show that partitions with overlapping blocks (or \textit{halos}) need to be computed in order to obtain accurate results with our graph approach. Naturally, the computational overhead increases with the overlap. The aim of this work is to minimize the computational cost of the matrix polynomial evaluation through appropriate partitioning schemes. Those schemes will directly minimize the cost of the corresponding polynomial evaluation as opposed to traditional edge cut minimization. In our article, we will experimentally study several algorithms for the aforementioned graph partitioning problem, which we formally introduce first. In \cite{Looz2016}, the authors consider a similar set-up in which the graph encoding the molecular structure is partitioned (using various techniques) for faster heuristic computation; however, their article crucially differs from ours in that there are no generally applicable results pertaining to lossless partitioning of the matrix polynomial using overlapping blocks.

Our approach allows us to avoid communication between processors after each iteration of \eqref{eq:expansion} until the entire polynomial is evaluated. To this end, each processor independently evaluates its assigned polynomial after partitioning and distributing the inital matrix, and the final output is assembled from the computed submatrices. In this article, different algorithmic approaches for computing graph partitions are assessed with respect to the aforementioned objective function and their computational effort. Importantly, we analyze the tradeoff between the additional computational costs for computing graph partitions before running the SP2 algorithm in parallel, and carrying out regular molecular dynamics. We also investigate the tuning of the number of blocks as a function of the graph size in order to minimize computational effort.

The structure of this paper is as follows. Section~\ref{section_theory} introduces the mathematical foundations for partitioning the evaluation of matrix polynomials, states our algorithm including a proof of correctness, and defines the graph partitioning problem we consider. Algorithms for constructing such partitions and their implementations are discussed in Section~\ref{section_algorithms}. Experimental results for several physical test systems are given in Section~\ref{section_results}. Section~\ref{section_discussion} discusses our results. Proofs for Section~\ref{section_theory} can be found in Appendix~\ref{appendix_proofs}, and Appendix~\ref{app:experiments} presents further experimental results.

A preliminary version of this article has been published as a conference paper in the \textit{SIAM Workshop on Combinatorial Scientific Computing (CSC16)}, see \cite{Djidjev2016}. This article is an extension of the work of \cite{Djidjev2016}, and includes proofs of all theoretical results of Section~\ref{section_theory} in Appendix~\ref{appendix_proofs}, pseudo-code of our simulated annealing approach in Section~\ref{section_SA}, a visualization of the relationship between the graph structure of a molecule and its partitioned graph representation in Section~\ref{subsection_assessment_all}, and more detailed performance data used for all experiments in Appendix~\ref{app:experiments}.

\section{Evaluating Matrix Polynomials on Partitions}
\label{section_theory}
We define a thresholded matrix polynomial, justify its parallelized evaluation, present an algorithm to evaluate a matrix polynomial in a parallelized fashion, and conclude by defining the cost function for an implied graph partitioning problem.

We encode the zero-nonzero structure of a symmetric matrix $X=\{x_{ij}\}$ as a graph $G(X)$, called the \textit{sparsity graph} of $X$. $G(X)$ contains a vertex for each row (or column) in $X$, and $G(X)$ contains an edge between vertices $i$ and $j$ if and only if $x_{ij} \neq 0$. We now generalize the matrix polynomial defined in \eqref{eq:expansion} for any symmetric $n \times n$ matrix $A$. Denote the superposition of operators of the type
\begin{equation}
  P=P_1\mathsmaller{\circ} T_1\mathsmaller{\circ}\dots\mathsmaller{\circ} P_s\mathsmaller{\circ} T_s
  \label{eq:poly2}
\end{equation}
as a \textit{thresholded matrix polynomial} of degree $2^s$, where $T_i$ is a thresholding operation and $P_i$ is a polynomial of degree $2$. For any graph $I$, we formally define $T_i$ as the graph operator (with associated edge set $E(T_i)$) such that $T_i(I)$ is a graph with a vertex set $V(I)$ and an edge set $E(I) \setminus E(T_i)$.

The application of a superpositioned operator $P$ of the type \eqref{eq:poly2} to a matrix $A$ of appropriate dimension is denoted as $P(A)$. In \eqref{eq:poly2}, $P$ is composed of polynomials $P_i$ and thresholding operations $T_i$. For our SM-SP2 application, the Hamiltonian is $A$ and the density matrix is $P(A)$.

For any matrix $A$, we define all matrices $B$ which have the same zero-nonzero structure as $A$ (that is, $G(A)=G(B)$) to be in the structure class $\M(A)$. We observe that the non-zero structure of $P(A)$ and $P(B)$ can be different.

Let $G=G(A)$ for a matrix $A$ and let $P$ be a thresholded matrix polynomial. We denote by $\PP(G)$ the minimal graph with the same vertices as $G$ such that if $P(B)|_{vw}\neq 0$ for any matrix $B \in \M(A)$ and any $v,w$, then there is an edge $(v,w)\in E(\PP(G))$. We interpret $\PP(G)$ as the worst-case zero-nonzero structure of $P(A)$ that excludes cancellations resulting from the addition of opposite-sign numbers, thus resulting in coincidental zeros. All diagonal elements of $A$ are assumed to be non-zero, and $E(T_i)$ is assumed to not contain a loop edge.

Assume we are given a set $\Pi=\{\Pi_1,\dots,\Pi_q\}$, where each $\Pi_i = U_i \cup W_i$, called a \textit{block} of $\Pi$, is a union of a \textit{core} vertex set $U_i$ and a \textit{halo} vertex set $W_i$. Given the two following conditions hold true, we call $\Pi$ a \textit{CH-partition} (or core-halo partition):
\begin{enumerate}
  \item $\bigcup_iU_i=V(G),\;U_i\cap U_j=\emptyset$ for all $i\neq j$;
  \item neighbors of vertices in $U_i$ that are themselves not in $U_i$ are contained in $W_i$.
\end{enumerate}

Let $H_{U_i}$ be the subgraph of $H=\PP(G)$ induced by all neighbors of $U_i$ in $H$. We combine all rows and columns of $A$ that correspond to vertices of $V(H_{U_i})$ in a submatrix $A_{U_i}$ of $A$. The following lemma shows that $P(A)$ can be computed on submatrices of the Hamiltonian.

\begin{lemma}\label{lem:matrixValuesMulti2}
  For any $v\in U_i$ and any neighbor $w$ of $v$ in $\PP(G)$, the element of $P(A)$ corresponding to the edge $(v,w)$ of $\PP(G)$ is equal to the element of $P(A_{U_i})$ corresponding to the edge $(v,w)$ of $H_i$.
\end{lemma}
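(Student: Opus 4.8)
The plan is to establish the statement by unrolling the operator $P=P_1\circ T_1\circ\cdots\circ P_s\circ T_s$ one layer at a time, organized as an induction on the number $s$ of thresholded quadratic layers.

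First I would isolate the following elementary locality observation for a single thresholded quadratic $Q=P'\circ T'$ with $P'(X)=\alpha X^2+\beta X+\gamma I$. If $Z$ is symmetric with $G(Z)$ contained in a graph $\Gamma$, and $u$ is a vertex whose closed $\Gamma$-neighborhood lies in a vertex set $S$, then the whole row $u$ of $Q(Z)$ is already determined by the principal submatrix $Z[S\times S]$, and using that same index set for $Q$ reproduces it exactly: indeed $(Q(Z))_{uw}=\alpha\sum_t\widetilde Z_{ut}\widetilde Z_{tw}+\beta\widetilde Z_{uw}+\gamma\delta_{uw}$ with $\widetilde Z=T'(Z)$, every index $t$ contributing to the sum is a $\Gamma$-neighbor of $u$ and hence lies in $S$, so both factors $\widetilde Z_{ut}$ and $\widetilde Z_{tw}$ have their indices in $S$, and restricting the summation to $S$ drops only zero terms. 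Thresholding causes no trouble here, since $T'$ zeroes a fixed set of positions and therefore acts identically on $Z$ and on any principal submatrix of it.

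With this in hand I would run the induction. The base case (the empty composition, where $\PP(G)=G$) is immediate. For the step, split $P=Q_1\circ P''$, where $Q_1=P_1\circ T_1$ is the outermost layer and $P''$ collects the inner $s-1$ layers; set $G''=\PP_{P''}(G)$, so that $H=\PP_{Q_1}(G'')$, and let $A''=P''(A)$, whose sparsity graph is contained in $G''$. Applying the locality observation to $Q_1$ acting on $A''$ with $S=V(H_{U_i})$ shows that the rows of $P(A)=Q_1(A'')$ indexed by $U_i$ depend only on the block $A''[V(H_{U_i})\times V(H_{U_i})]$ and coincide there with $Q_1$ of that block; the point is that, by the very definition of $\PP$, a $G''$-neighbor $t$ of $u\in U_i$ reached across a non-thresholded edge satisfies $(u,t)\in E(H)$, hence $t\in V(H_{U_i})$. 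The inductive hypothesis, applied to $P''$, then rewrites that block of $P''(A)$ as the corresponding block of $P''(A_{U_i})$, and composing with $Q_1$ gives $P(A_{U_i})$ on the $U_i$-rows, which is the claim.

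The step that requires genuine care — the main obstacle — is not either computation above but the bookkeeping of the region of dependence as the $s$ layers are peeled: one must verify that iterating the neighborhood growth of the locality observation, starting from the core $U_i$ and passing successively through the sparsity graph of $P''(A)$, then of the next inner truncation, and so on down to $A$ itself, never leaves $V(H_{U_i})=N_H[U_i]$, and likewise that the set of columns one ever needs stays inside $N_H[U_i]$. This is exactly where $H=\PP(G)$, being the worst-case cancellation-free sparsity pattern of the \emph{full} operator $P$, does its work: the total fill of $P$ is what dominates each of the intermediate neighborhoods, and the halo condition of the CH-partition is precisely what guarantees that those neighborhoods are materially present inside $A_{U_i}$. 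Disentangling the interaction between the quadratic steps, which enlarge neighborhoods, and the thresholding steps, which shrink the intervening sparsity graphs, is the delicate bookkeeping; once the containment is in place, the claimed equality of the two matrix entries follows by applying the faithfulness half of the locality observation $s$ times in succession.
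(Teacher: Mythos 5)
Your overall strategy is the same as the paper's: a locality/neighborhood statement for a single thresholded quadratic layer (this is the role of Lemma~\ref{lem:neighb} in Appendix~\ref{appendix_proofs}), followed by an induction on $s$ that peels off one $P_j\circ T_j$ layer at a time and uses the fact that the quadratic form $\sum_t Z_{vt}Z_{tw}$ only ranges over neighbors of $v$ (this is the proof of Lemma~\ref{lem:matrixValues}, generalized to multi-vertex cores). So there is no methodological novelty to report; the question is whether your inductive step closes, and it does not as written.

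The gap is exactly the step you label ``the main obstacle'' and then assert rather than prove. In your inductive step, the outer layer computes $\bigl(Q_1(A'')\bigr)_{vw}=\alpha\sum_t\widetilde{A''}_{vt}\widetilde{A''}_{tw}+\cdots$ for $v\in U_i$, which requires the entries $A''_{tw}$ with $t$ and $w$ both in the \emph{halo}. But the inductive hypothesis --- the lemma statement for $P''$ --- only asserts agreement of $P''(A)$ and $P''(A_{U_i})$ on rows indexed by the core $U_i$; it says nothing about halo rows, so ``the inductive hypothesis rewrites that block'' is not licensed for the full block $V(H_{U_i})\times V(H_{U_i})$. Nor can the statement simply be strengthened to the whole block: a halo vertex $t$ generally has $H$-neighbors outside $V(H_{U_i})$, so $P''(A_{U_i})$ genuinely misses contributions to its halo-row entries --- this is visible already in Figure~\ref{fig:example}, where the halo-halo entry is $0.73$ in $P(A_U)$ but $0.74$ in $P(A)$. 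The halo condition of the CH-partition gives one hop of $H$-neighborhood around the core; it does not by itself give the closure under iterated dependence that your argument needs, because intermediate vertices in the expansion of $P(A)_{vw}$ can lie up to $2^s-1$ steps from $v$ in $G$ while $N(v,H)$ is only guaranteed to contain what survives thresholding of the final fill. Closing the induction therefore requires either a strengthened inductive statement identifying precisely which parts of the halo-row entries are reproduced correctly, or an argument that the discrepancies in halo-row entries never propagate into a core row of $P(A_{U_i})$ (e.g., by tracking the full set of contributing length-$2^s$ paths and using the assumption $E(T_j)\cap E(\PP(G))=\emptyset$). To be fair, the paper's own proof of Lemma~\ref{lem:matrixValues} is equally terse at this exact point --- it invokes the inductive assumption to conclude $A'(u,w)=X'(u',w')$ with $u$ a halo vertex, which the stated hypothesis does not cover --- so you have correctly located the crux; but since your proposal explicitly defers this step and supplies no argument for it, it cannot be counted as a complete proof.
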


The proof is given in Appendix~\ref{appendix_proofs}.
Lemma~\ref{lem:matrixValuesMulti2} justifies the parallelized evaluation of a matrix polynomial.

Let us apply the aforementioned results to a Hamiltonian matrix $A$ in a QMD simulation. (See the example in Figure~\ref{fig:example}). In this case, we can assume that the sparsity structure of the density matrix $D$ from the previous QMD simulation step is being passed on to $P(A)$. We can thus approximate $H=\PP(G)$ with $G(D)$ since the graph $H$ is unknown until $P(A)$ is computed. This is a reasonable approximation since the connectivity will only change marginally from step to step in the QMD simulation. Since by construction, the graph $H$ is totally contained in the graph of its density matrix (which is the object we want to compute and thus unknown), it is sensible to approximate its connectivity from the density matrix of the previous time step of the QMD simulation. The halos can also be taken from $H$ in practice.

\begin{figure}
	\begin{subfigure}{6.5cm}\centering
	$~~~A= \left(\begin{matrix}-1.2&1.89&0&0&0\\1.89&0.92&0.08&0&0\\0&0.08&0.85&0.11&0\\0&0&0.11&0.78&1.21\\0&0&0&1.21&-1.31\end{matrix}\right)$
	\caption{}
  \end{subfigure}
	\begin{subfigure}{8cm}\centering
		$P(A)=\left(\begin{matrix}5.01&-0.53&0.15&0.&0.\\-0.53&4.43&0.14&\textbf{0.009}&0.\\0.15&0.14&0.74&0.18&0.13\\0.&\textbf{0.009}&0.18&2.08&-0.64\\0.&0.&0.13&-0.64&3.18\end{matrix}\right)$
	\caption{}	\end{subfigure}
	\begin{subfigure}[b]{2.8cm}\centering
	\includegraphics[width=0.95\linewidth]{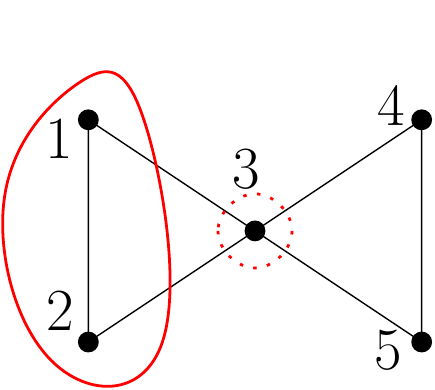}
	\caption{}  \end{subfigure}\hfill
  \begin{subfigure}[b]{5cm}\centering
	$A_U= \left(\begin{matrix}-1.2&1.89&0\\1.89&0.92&0.08\\0&0.08&0.85\end{matrix}\right)$
	\caption{}  \end{subfigure}\hfill
  \begin{subfigure}[b]{5.5cm}\vfill\centering
	$P(A_U)=\left(\begin{matrix}5.01&-0.53&0.15\\-0.53&4.43&0.14\\0.15&0.14&0.73\end{matrix}\right) $
	\caption{}\end{subfigure}
	\caption{Example of using CH-partitioning for distributed evaluation of matrix polynomial $P(A)=A^2$, with a thresholding operator that removes all edges/matrix entries with values below $0.01$. (a) Input symmetric matrix $A$. (b) $P(A)=A^2$. (c) After the thresholding operator removes elements $(2,4)$ and $(4,2)$ (in bold) from $P(A)$,  the sparsity graph $\PP(G)$ of the resulting matrix is shown. We are considering a block of a CH-partitioning with core set   $U=\{1,2\}$ (encircled with a solid line) and halo the set $H=\{3\}$ (encircled with a dashed line). (d) Submatrix $A_U$ consisting of the first three (by $U\cup H=\{1,2,3\}$) rows and columns of $A$. (e) Matrix $P(A_U)=A_U^2$. According to Lemma~\ref{lem:matrixValuesMulti2}, the elements of the first two (by $U=\{1,2\}$) columns/rows of $P(A_U)$ are equal to the corresponding nonzero elements of the first two columns/rows of thresholded $P(A)$.}
	\label{fig:example}
\end{figure}

We propose the following algorithm for computing $P(A)$ from $H=G(D)$:
\begin{enumerate}
  \item Divide $V(G)$ into $q$ disjoint sets $\{U_1,\dots,U_q\}$ and define a CH-partition $\Pi=\{\Pi_1,\dots,\Pi_q\}$, where $\Pi_i$ has core $U_i$ and halo $N(U_i,H)\setminus U_i$;
  \item Construct submatrices $A_{U_i}$ for all $i=1,\dots,q$;
  \item Compute $P(A_{U_i})$ for all $i$ independently using dense matrix algebra;
  \item Given a vertex $i$, let $k$ be the index such that the set $U_k$ contains $i$. Let $j$ be the row in $A_{U_k}$ that corresponds to the $i$-th row in $A$. Then, define $P(A)$ as a matrix whose $i$-th row equals the $j$-th row of $P(A_{U_k})$.
\end{enumerate}
This algorithm computes $P(A)$ as demonstrated in Lemma~\ref{lem:matrixValuesMulti2}.

The computational bottleneck of the algorithm for $P(A)$ is caused by the dense matrix-matrix multiplication required to compute $P(A_{U_i})$ for all $i$ in step~(3).

To be precise, according to \eqref{eq:poly2}, computing $P(A_{U_i})$ takes $s(c_i+h_i)^3$ operations, where $c_i$ and $h_i$ are the size of the core and the halo of $\Pi_i$ and $s$ is the number of superpositioned operators (see \eqref{eq:poly2}). In this calculation, the computational effort for thresholding some matrix elements is excluded, since this effort is quadratic in the worst case and linear in $c_i+h_i$ in average cases. We observe that a CH-partitioning which minimizes the effort to compute $P(A)$ also minimizes $\sum_{i=1}^q (c_i+h_i)^3$ due to the fact that $s$ is independent of $\Pi$.

This observation motivates our \textit{CH-partitioning problem}, defined as follows: For an undirected graph $G$ and an integer $q \geq 2$, split $G$ into $q$ blocks $\Pi_1,\dots,\Pi_q$ such that
\begin{equation}
  \label{eq:sumOfCubes}
  \sum_{i=1}^q (c_i+h_i)^3
\end{equation}
is minimized, where $\Pi_i$ has a core $U_i$ of size $c_i$ and a halo $N(U_i,G) \setminus U_i$ of size $h_i$.

As an example, the optimal CH-partitioning for a star graph of $n$ vertices has a single non-empty block containing all vertices: this block is composed of the central vertex, whose halo contains all other vertices. In contrast, a standard (edge cut) partitioning will have $n$ parts, thus demonstrating that a CH-partitioning minimizing \eqref{eq:sumOfCubes} can be quite different from a standard balanced partitioning.

\section{Algorithms for Graph Partitioning Considered in our Study}
\label{section_algorithms}
In this section we investigate the ability of existing graph partitioning packages, as well as our own heuristic algorithm, for computing CH-partitions that minimize the objective function \eqref{eq:sumOfCubes}. Those algorithms are: \textit{METIS} (version 5.1.0) and \textit{hMETIS} (version 1.5) due to their widespread use, and \textit{KaHIP} (version 0.7) based on its convincing performance at the 10th DIMACS Implementation Challenge~\cite{DBLP:conf/dimacs/2012}.

\subsection{Edge Cut Graph Partitioning}
We observe that we obtain $|V(G)| + \sum_i h_i$ when leaving out the cubes in the objective function \eqref{eq:sumOfCubes}, thus making it necessary to minimize the sum of the halo nodes over all blocks.

Regular graph partitions and CH-partitions are related. Suppose we are given a regular partition $P$. For any part in $P$, we can define a core corresponding to that part, and a halo consisting of all adjacent vertices of the core vertices (excluding the core vertices themselves). We define the CH-partition $\Pi$ to consist of precisely those blocks for any element of $P$. It must then be true that either $v$ or $w$ is a halo vertex for any cut edge $(v,w)$ of $P$. Conversely, there exists a core vertex $w$ such that $(v,w)$ is a cut edge for any halo vertex $v$ belonging to some part in $\Pi$.

This shows that the cut edges of $P$ and the set of halo nodes in $\Pi$ are related but not equal. We observe that another measure, the \textit{total communication volume}, exactly corresponds to the sum of halo nodes. Certain tools like \textit{METIS} allow us to optimize with respect to the total communication volume.
By ignoring the cubes in \eqref{eq:sumOfCubes}, we aim to study how well CH-partitions can be produced by regular graph partitioning tools.
Additionally, we improve the solutions obtained by standard graph partitioning tools with our own heuristic in Section~\ref{section_SA}.
The three following algorithms will be used:

\subsubsection{METIS}
\label{subsection_metis}
\textit{METIS} \citep{KarypisKumar1999} uses a three-phase multilevel approach to perform graph partitioning:
\begin{enumerate}
  \item Starting from the original graph $G=G_n$ (where $n=|V|$), \textit{METIS} generates a sequence of graphs $G_n, G_{n-1}, \ldots, G_{n'}$ for some $n'<n$ to coarsen the input graph. The coarsening ends with a suitably small graph $G_{n'}$ (typically having $n'<100$ vertices).
  \item An algorithm of choice is used to partition $G_{n'}$.
  \item Using the sequence $G_{n'},\dots,G_n$, the partitions are expanded back from $G_{n'}$ to the full graph $G_n$.
\end{enumerate}
Additionally, \textit{METIS} employs a refinement algorithm such as the one of Fiduccia-Mattheyses \citep{FiducciaMattheyses1982} to improve the partitioning after each projection. This is necessary since the finer the partition, the more degrees of freedom it has during the uncoarsening phase.
Several tuning parameters can be set in \textit{METIS}, including the size of $G_n$, the coarsening algorithm, and the algorithm used for partitioning $G_n$.

\subsubsection{KaHIP}
Several multilevel graph partitioning algorithms are combined in \textit{KaHIP}~\cite{sandersschulz2013}.
\textit{KaHIP} works similarly to \textit{METIS}. A given input graph is first contracted, partitions are computed on the coarsest contraction level, and the partitions obtained in this way are projected back to coarser levels, where refinement algorithms are used to enhance the solutions. \textit{KaHIP} offers max-flow/min-cut \citep{SandersSchulz2011,FordFulkerson1956} and Fiduccia-Mattheyses refinement \citep{FiducciaMattheyses1982} for local improvement of the solution. Additionally, F-cycles \citep{SandersSchulz2011} can be employed as a global search technique.

\subsubsection{Hypergraph partitioning}
\label{subsection_hmetis}
A hypergraph formulation is an alternative approach to the classical interpretation of the density matrix as an adjacency matrix (where each nonzero interaction in the density matrix is an undirected and unweighted edge).

The set of all neighbors of a vertex form a single hyperedge in the hypergraph formulation. The main advantage of using hyperedges consists in the fact that minimizing the edge-cut with respect to hyperedges results in either all or zero vertices being included in a part: if a hyperedge is not a cut-edge, then this automatically implies that all the neighbors of that core vertex (i.e., its halo) are included in the same part.

We use \textit{hMETIS} of \cite{KarypisKumar2000} in order to compute hypergraph partitionings. \textit{hMETIS} is the hypergraph analog of \textit{METIS}.

\subsection{Refinement with Simulated Annealing}
\label{section_SA}
The objective function \eqref{eq:sumOfCubes} we aim to minimize differs from the size of the edge or hyperedge cut minimized by standard graph and hypergraph partitioning algorithms. With the help of the algorithm derived in this section we explicitly minimize \eqref{eq:sumOfCubes}.

A standard tool in optimization is the probabilistic algorithm of \citep{Kirkpatrick1983}, called simulated annealing (\textit{SA}). SA iteratively proposes random modifications to an existing solution in order to improve it, and thus optimizes without gradients. If a proposed modification (or move) does not immediately lower the objective function, it might still be accepted with a certain \textit{acceptance probability}. The acceptance probability is proportional to the magnitude of the (unfavorable) increase in the objective function, and antiproportional to the runtime. The latter makes it more likely for SA to accept unfavorable moves in the exploration phase at the start of each run, and it is implemented using a strictly decreasing \textit{temperature} function. Modifications to the existing solution which further minimize the objective function are always accepted.

A fixed number of iterations, a vanishingly small temperature, or the lack of further improvements in the solution over a certain number of iterations can be employed as stopping criteria for SA.

We test the following proposal functions that return modifications to subgraphs induced by an existing block of the partition, where $P$ and $P'$ are simply sets of nodes:
\begin{enumerate}
  \item Select a random block $P'$, select one of its halo nodes $v$ at random and move $v$ into block $P$.\label{sa_item_halo}
  \item Select a random block $P'$, select one of its nodes $v$ at random and move $v$ into $P$.\label{sa_item_random}
  \item Select the block $P'$ with most halo nodes and (a) move a random node $v$ into $P$, (b) make a random halo node of $P$ a core node, or (c) move any node of $P$ to another block.\label{sa_item_largest}
  \item Like (\ref{sa_item_largest}.) using the block $P'$ with the largest sum of core and halo nodes.
\end{enumerate}
Many more sensible proposal functions could be devised. However, in our experiments we observed that the above proposals result in a similar behavior of SA, with the best tradeoff between speed and performance being achieved by scheme (\ref{sa_item_halo}.).

\begin{algorithm}[t]
  \SetKwRepeat{Repeat}{Repeat}{x}
  \DontPrintSemicolon
  \caption{Simulated Annealing\label{algorithm_sa}}
  \KwIn{Graph G, number of iterations $N$, initial partitioning $\Pi$}
  \KwOut{Updated partitioning $\Pi$}
  Select a temperature function $t(i)=1/i$ ;\\
  \For{$i=1$ to $N$}{
    Select random block $\pi$ in $\Pi$ and a random core-halo edge $(v,w)$ in $\pi$;\\
    Make $w$ a core vertex of $\pi$ and update the halo of $\pi$;\\
    Compute the values $S$ and $S'$ of \eqref{eq:sumOfCubes} for $\Pi$ and $\Pi'$, respectively, and set $\Delta=S'-S$;\\
    Compute $p = \exp \left(-\Delta/t(i) \right)$;\\
    Set $\Pi=\Pi'$ with probability $\min(1,p)$;\\
  }
  Output partitioning $\Pi$;\\
\end{algorithm}

Algorithm~\ref{algorithm_sa} states the SA implementation we use in our experiments. Any regular (edge cut) partitioning (e.g., obtained with \textit{METIS}), and even a random partitioning, serves as input $\Pi$ (the set of blocks) to Algorithm~\ref{algorithm_sa}. The SA algorithm runs over a fixed number of $N$ iterations. In each iteration, we randomly select a block $\pi \in \Pi$ as well as a random edge joining a core vertex $v$ with a halo vertex $w$ in $\pi$. Afterwards, $\pi$ is updated with $w$ being a core vertex. After storing the new partitioning in a set $\Pi'$, both $\Pi$ and $\Pi'$ are assessed. For this we compute the change $\Delta$ in the objective function \eqref{eq:sumOfCubes} between $\Pi'$ and $\Pi$, which is used to update the acceptance probability $p$. We accept $\Pi'$ (thus overwriting $\Pi:=\Pi'$) with probability $p$. Note that $p>1$ if $\Delta<0$, thus leading to a guaranteed acceptance of a proposal that directly improves \eqref{eq:sumOfCubes}. Afterwards, the iteration is repeated.

SA is run with a maximal number of iterations $N=100$ as stopping criterion, and the temperature function is chosen as $t(i)=1/i$. In practice, we first threshold a density matrix (see Section~\ref{section_theory}), convert it to a graph, compute regular edge cut partitionings for the converted graph with a standard software, and post-process the partition with SA.

\section{Experiments}
\label{section_results}
Using three measures the quality of the CH-partitions returned by the algorithms of Section~\ref{section_algorithms} is evaluated. Those measures are the objective function (\ref{eq:sumOfCubes}), the algorithmic runtime, and the number of MPI ranks and threads in an assessment of the scaling behavior of the G-SP2 algorithm (for one fixed system). We employ graphs derived from representations of actual molecules as opposed to simulated random graphs.

Unless noted otherwise, all experiments were performed on the Darwin cluster of Los Alamos National Laboratory, which hosts 20-core 2.60 GHz Intel Xeon E5-2660 processors with v3 architecture.

\subsection{Parameter Choices for METIS and hMETIS}
\label{subsection_parameters}
Using a grid search over sensible values, we tune the parameters of \textit{METIS} and \textit{hMETIS}, and will keep the set of parameters yielding the best average performance for all systems considered in this section fixed throughout the remainder of the simulations.

The default multilevel $k$-way partitioning as well as the default sorted heavy-edge matching for coarsening the graph were employed to run \textit{METIS}. Importantly, the user can choose to minimize either the \textit{edge cut} or the \textit{total communication volume} of the partitioning within the $k$-way partitioning routine of \textit{METIS}. As our definition of the \textit{sum of halo nodes} in Section~\ref{section_theory} is equivalent to the definition of the total communication volume of \textit{METIS}, we choose this option.

We employ \textit{hMETIS} with the followed parameters: we use recursive bisectioning (instead of $k$-way partitioning) with vertex grouping scheme \textit{Ctype}$=1$ (meaning the hybrid first-choice scheme HFC), Fiduccia-Mattheyses refinement (refinement heuristic parameter set to $1$), and $V$-cycle refinement on each intermediate solution ($V$-cycle refinement parameter set to $3$). These parameters are explained in the \textit{hMetis} manual~\cite{hmetis_manual}.

\subsection{A Collection of Test Graphs Derived from Molecular Systems}
\label{subsection_systems}
This section uses a selection of physical test systems to evaluate all algorithms of Section~\ref{section_algorithms}, which were chosen to represent a variety of realistic scenarios where graph partitioning can be applied to MD simulations. Those test systems are available from the authors upon request. We demonstrate how the graph structure influences the results (Section~\ref{subsection_assessment_all}) and additionally provide insights into the physics of each test system.

\begin{figure}
  \begin{center}
    \includegraphics[width=0.25\textwidth]{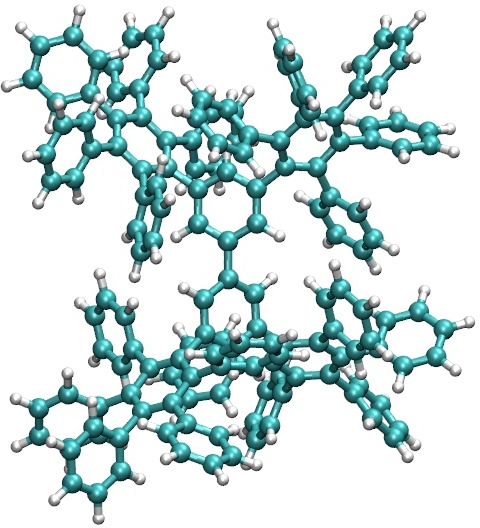}~
    \includegraphics[width=0.30\textwidth]{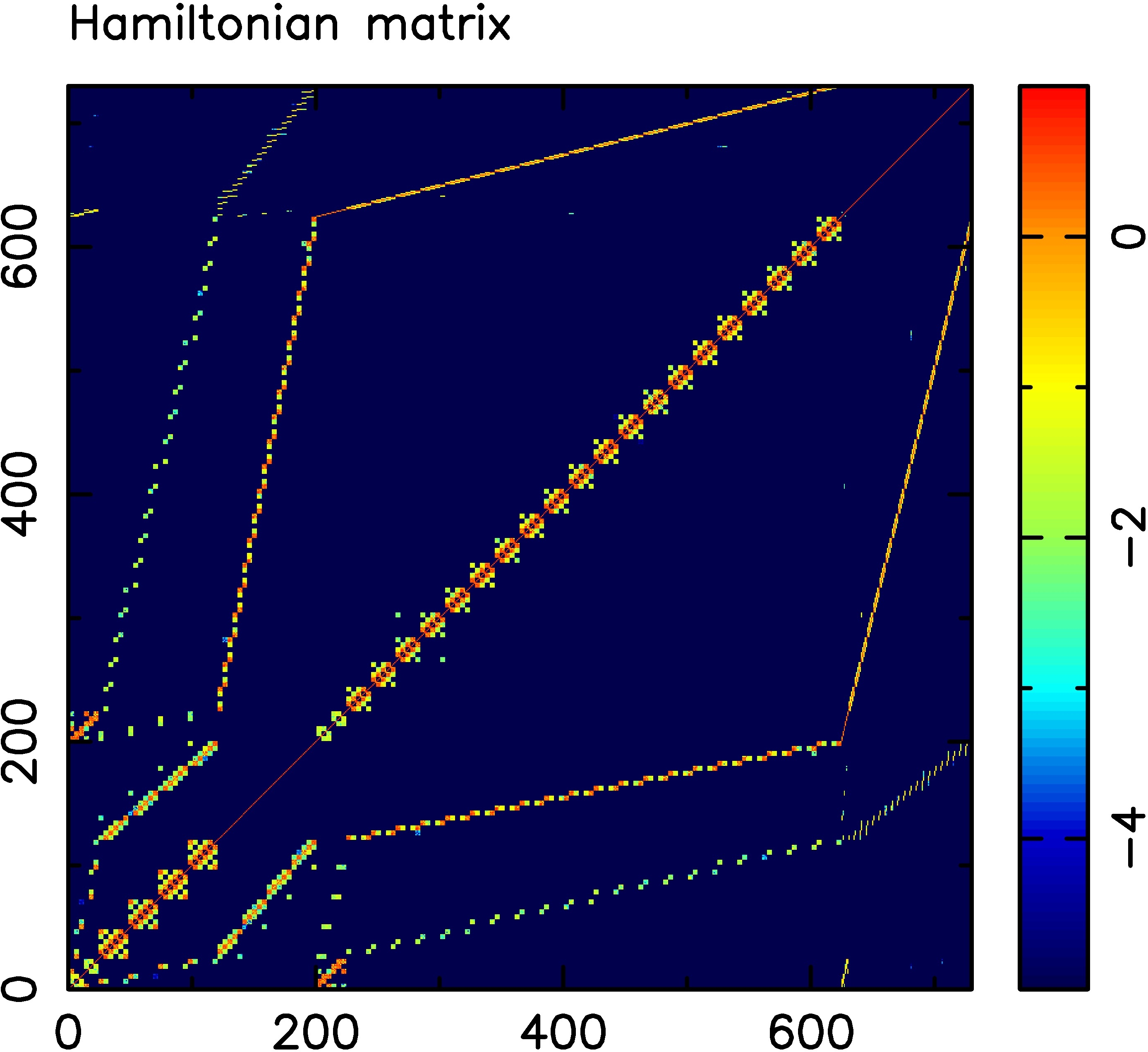}~
    \includegraphics[width=0.30\textwidth]{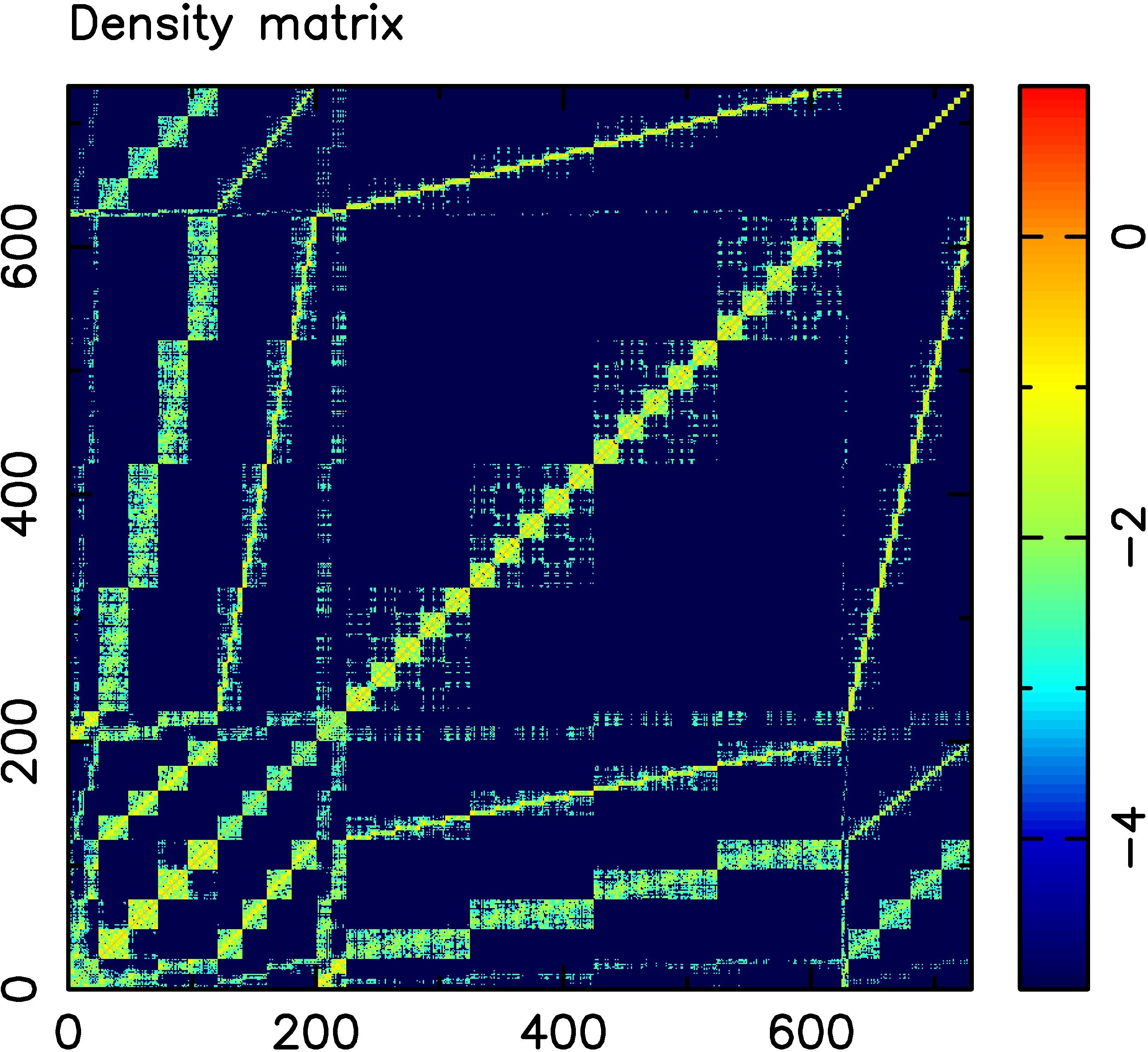}
    \caption{Molecular representation of phenyl dendrimer (left) using cyan and white spheres for carbon and hydrogen atoms, respectively. Hamiltonian matrix as 2D representation (middle) and thresholded density matrix (right). All plots show log$_{10}$ of the absolute values of all matrix elements. The SM-SP2 algorithm was used to compute the density matrix. Figure taken from \cite{Djidjev2016}. Copyright \textcopyright 2016 Society for Industrial and Applied Mathematics. Reprinted with permission. All rights reserved.\label{fig:dendrimer_molecule}}
  \end{center}
\end{figure}

A dendrimer molecule with 22 covalently bonded phenyl groups of solely C and H atoms is schematically shown in Figure~\ref{fig:dendrimer_molecule} (left). The graph of the dendrimer molecule has 730 vertices, composed of 262 atoms and 730 orbitals.

Figure~\ref{fig:dendrimer_molecule} (middle) displays the absolute values of the Hamiltonian matrix for the dendrimer system. Figure~\ref{fig:dendrimer_molecule} (right) displays the density matrix encoding the physical properties of the system, which is obtained by applying the SM-SP2 algorithm to the Hamiltonian.

We threshold the density matrix at $10^{-5}$ to convert it into a graph needed to find meaningful physical components via graph partitioning.
This is done with all systems of Table~\ref{tab:systems} to arrive at their adjacency matrices.
\begin{table}[t]
  \caption{Physical systems of our study: number of vertices $n$ in the graph and number of edges $m$. Table taken from \cite{Djidjev2016}. Copyright \textcopyright 2016 Society for Industrial and Applied Mathematics. Reprinted with permission. All rights reserved.\label{tab:systems}}
  \begin{center}
	\scriptsize
    \begin{tabular}{|l|r|r|r|l|}
      \hline
      Name & $n$ & $m$ & $m/n$ & Description\\
      \hline
      polyethylene dense crystal	& 18432	& 4112189	& 223.1	& crystal molecule in water solvent (low threshold)\\
      polyethylene sparse crystal	& 18432	& 812343	& 44.1	& crystal molecule in water solvent (high threshold)\\
      phenyl dendrimer 			& 730 	& 31147 	& 42.7	& polyphenylene branched molecule\\
      polyalanine 189 			& 31941	& 1879751	& 58.9	& poly-alanine protein solvated in water\\
      peptide 1aft 			& 385 	& 1833 		& 4.76	& ribonucleoside-diphosphate reductase protein\\
      polyethylene chain 1024 		& 12288	& 290816 	& 23.7	& chain of polymer molecule, almost 1-dimensional\\
      polyalanine 289 			& 41185	& 1827256	& 44.4	& large protein in water solvent\\
      peptide trp cage 			& 16863	& 176300	& 10.5	& smallest protein with ability to fold (in water)\\
      urea crystal			& 3584	& 109067	& 30.4	& organic compound in living organisms\\
      \hline
    \end{tabular}
  \end{center}
\end{table}
The first column of Table~\ref{tab:systems} displays the molecule name, the number of vertices $n$ and edges $m$ (second and third column) of its graph representation, and its average vertex degree $m/n$ (fourth column). A short description of the molecule can be found in the last column.

\begin{figure}
  \begin{center}
    \includegraphics[width=0.22\textwidth]{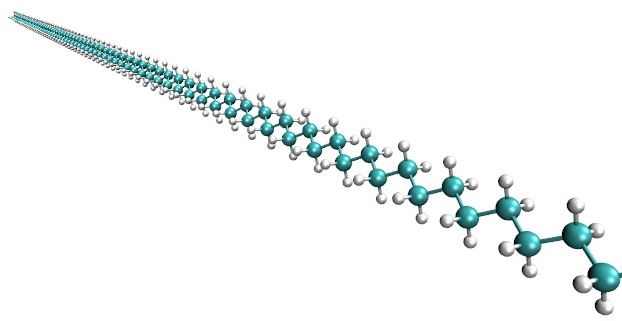}
    \includegraphics[width=0.22\textwidth]{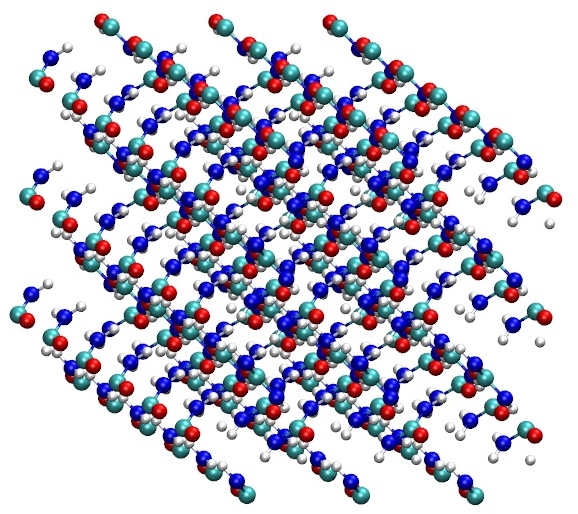}
    \includegraphics[width=0.22\textwidth]{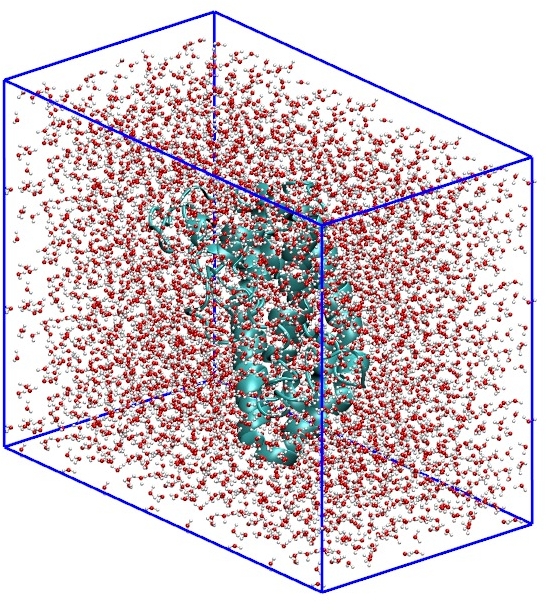}
    \includegraphics[width=0.22\textwidth]{dendrimer1.jpg}
    \caption{Molecular systems of this study: polyethyene linear chain (first plot), urea crystal (second plot), 189 residue polyalanine solvated in a water box (third plot), and phenyl dendrimer molecule (fourth plot). Cyan, blue, red and white spheres represent carbon, nitrogen, oxygen and hydrogen atoms, respectively. Figure taken from \cite{Djidjev2016}. Copyright \textcopyright 2016 Society for Industrial and Applied Mathematics. Reprinted with permission. All rights reserved.\label{fig:systems}}
  \end{center} 
\end{figure}

We consider four topologically different types of molecular systems (see Figure~\ref{fig:systems}) as test beds for the partitioning algorithms. This is to provide a wider range of test systems for our algorithms. Figure~\ref{fig:systems} displays a one dimensional system in the first panel (polyethyene linear chain with repeated CH$_{2}$ units), an anisotropic pristine 3D urea crystal (second panel), a polyalanine molecule solvated in water (third panel) with a typical $\alpha$-helix secondary structure, and a dendrimeric system with a fractal arrangement of phenyl rings (fourth panel) to challenge our partitioning algorithms.

\subsection{Comparison of the Partitioning Algorithms}
\label{subsection_assessment_all}
\begin{figure}[t]
  \centering
  \includegraphics[width=0.9\textwidth]{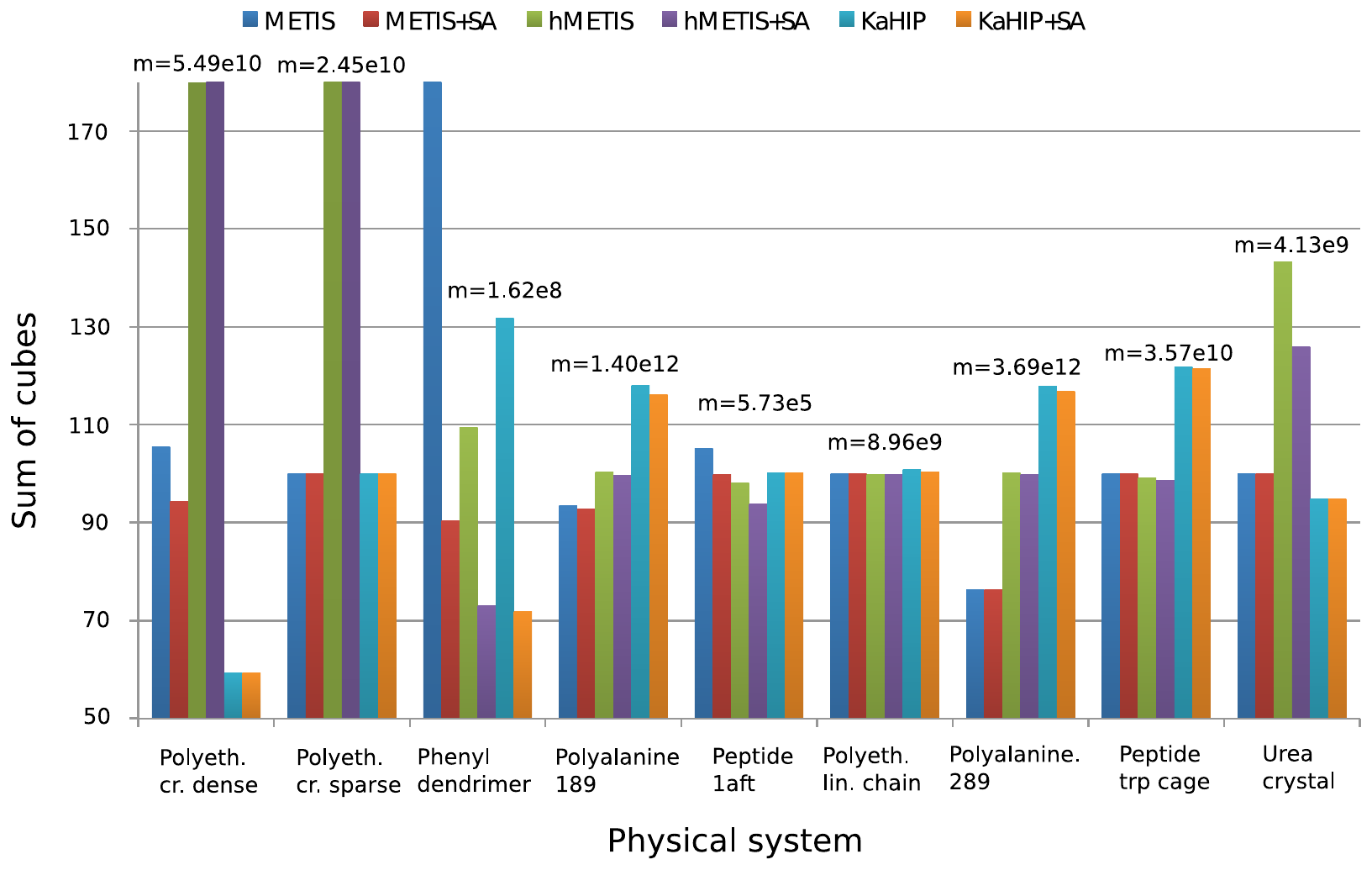}
  \caption{Sum of cubes performance measure to evaluate partitions. Values are normalized to have a median of $100$. To make the chart more informative, very large values are truncated, though all exact values can be found in \cite[Table~$2$]{Djidjev2016}. Figure taken from \cite{Djidjev2016}. Copyright \textcopyright 2016 Society for Industrial and Applied Mathematics. Reprinted with permission. All rights reserved.\label{fig:charts}}
\end{figure}

\begin{figure}[t]
  \centering
  \includegraphics[width=0.9\textwidth]{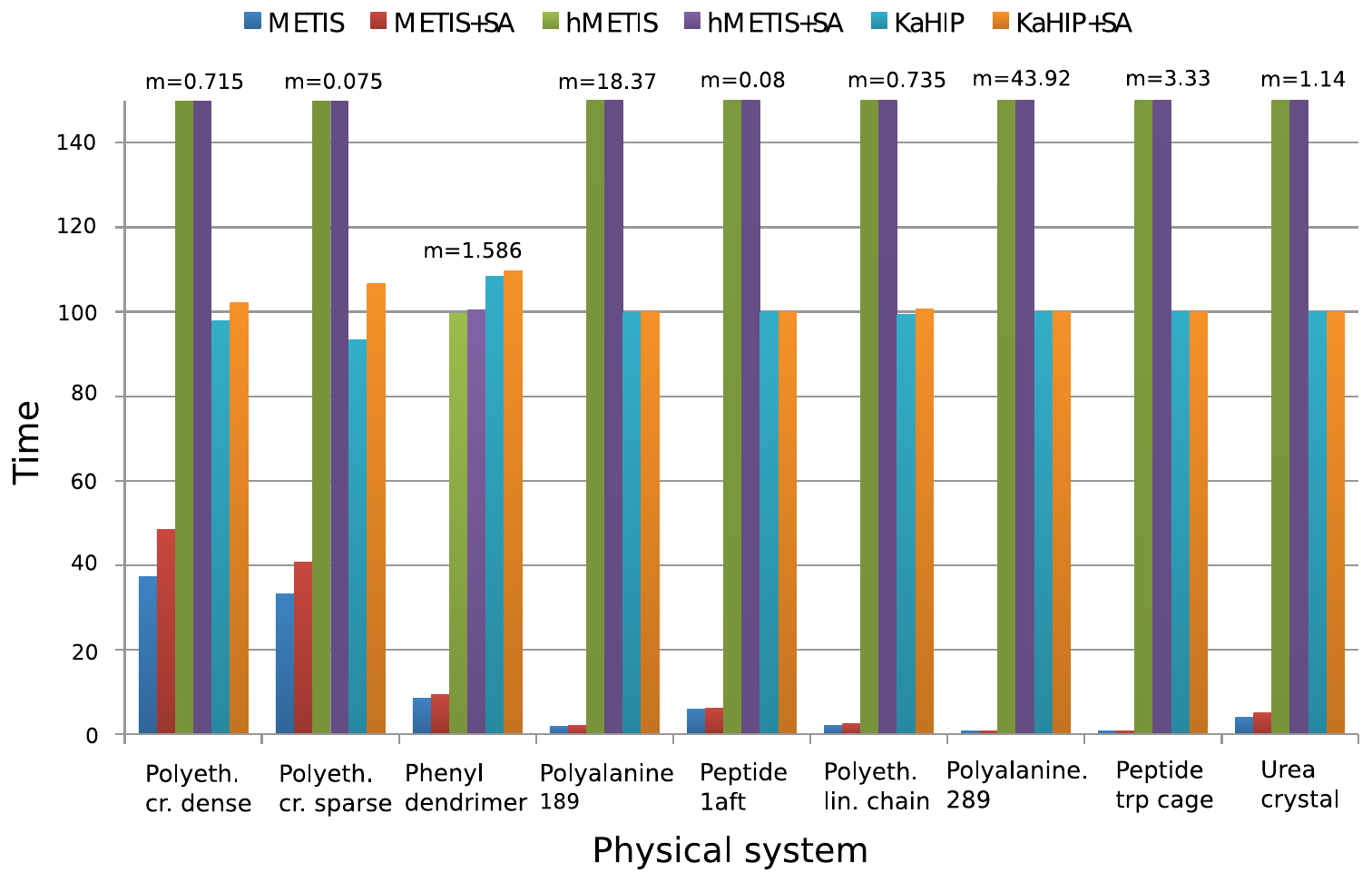}
  \caption{Computing time for partitioning. Test systems of Table~\ref{tab:systems}. We use the formatting of Figure~\ref{fig:charts} to handle the big discrepancy between values for different graphs. Figure taken from \cite{Djidjev2016}. Copyright \textcopyright 2016 Society for Industrial and Applied Mathematics. Reprinted with permission. All rights reserved.\label{fig:charts2}}
\end{figure}

Six methods are tested to partition each graph of Table~\ref{tab:systems} into 16 blocks:
\begin{enumerate}
  \item \textit{METIS} with parameters of Section~\ref{subsection_parameters};
  \item \textit{METIS} with subsequent simulated annealing (SA);
  \item \textit{hMETIS};
  \item \textit{hMETIS} with subsequent SA;
  \item \textit{KaHIP};
  \item \textit{KaHIP} with subsequent SA.
\end{enumerate}
As before, the sum of cubes \eqref{eq:sumOfCubes} criterion is used to assess the effectiveness of each method.

Figures~\ref{fig:charts} and~\ref{fig:charts2} show experimental results. We observe that all algorithms perform well (with the exception of the first two systems), and that \textit{METIS} and \textit{KaHIP} are significantly faster than \textit{hMETIS}. Importantly, post-processing with SA seems to improve solutions in almost all cases at negligible additional runtime, and is thus recommended.

Surprisingly, CH-partitions seem to pose a challenge for \textit{hMETIS} as its solutions are usually worse than those of the other two methods, and its runtime significantly exceeds the one of the other methods (an explanation of this remains for further reseach). We conclude that for these two reasons, \textit{hMETIS} seems unsuited for QMD simulations over longer time intervals, which is the aim of this work.

Although solutions returned by \textit{KaHIP} are of very good quality (measured with the sum of cubes criterion), the combination of \textit{METIS} and SA still outperforms \textit{KaHIP} while also having a shorter combined runtime.

The behavior of our algorithms seems to be dependent on the sparsity of the graph of a physical system. First, \textit{METIS} is able to outperform \textit{hMETIS} for denser graphs, but not for sparser ones. Second, the post-improvement of partitions with SA seems to be especially effective for dense graphs, which can be explained with the fact that dense graphs offer more possibilities to reassign and optimize edges than sparse graphs. This can be seen when applying the combination of \textit{METIS+SA} to the dense dendrimer system, see \cite[Table~$2$]{Djidjev2016} for details.

\begin{figure}
  \begin{center}
    \includegraphics[trim={2cm 2cm 2cm 2cm},clip,width=0.49\textwidth]{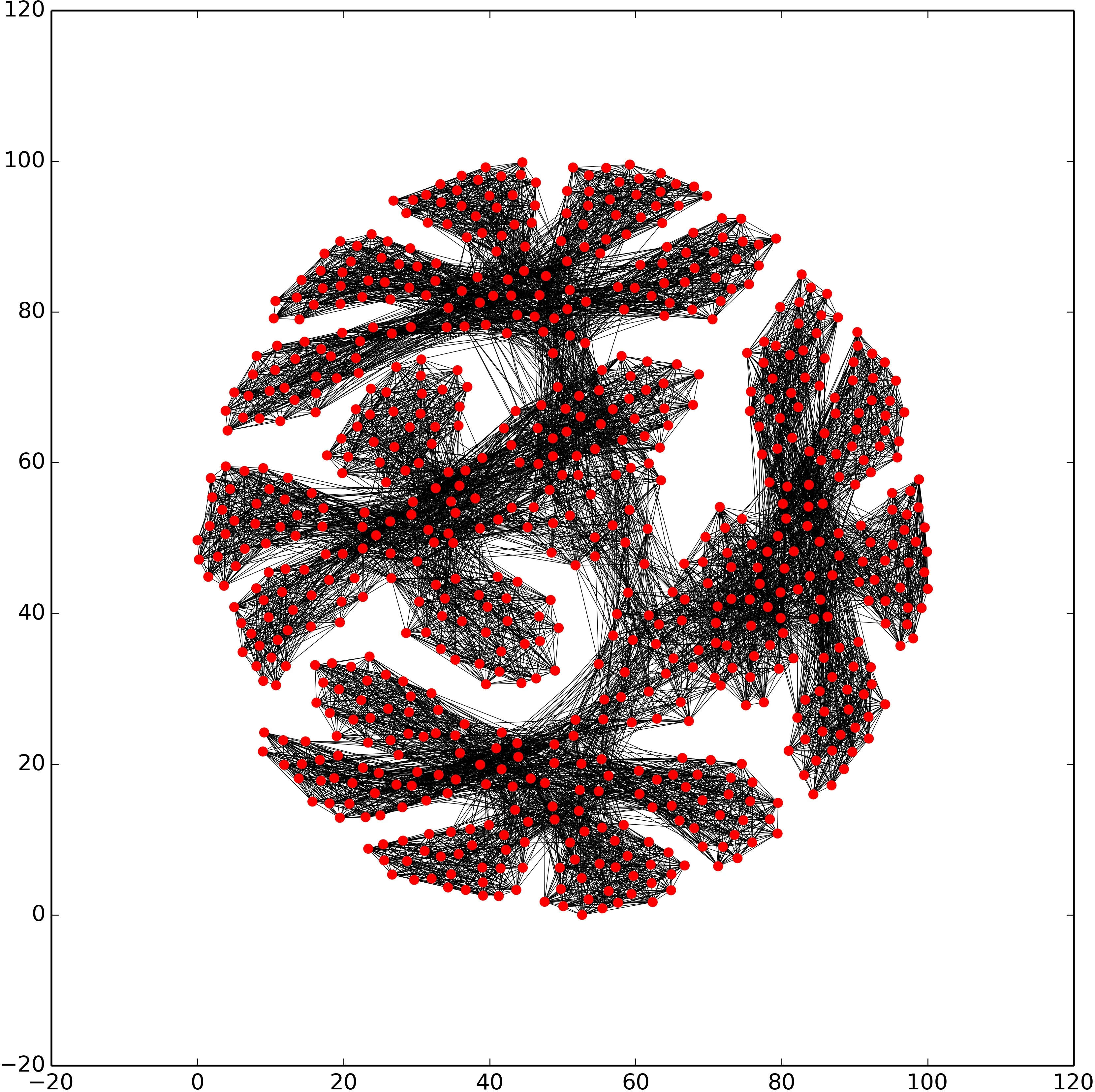}~
    \includegraphics[trim={2cm 2cm 2cm 2cm},clip,width=0.49\textwidth]{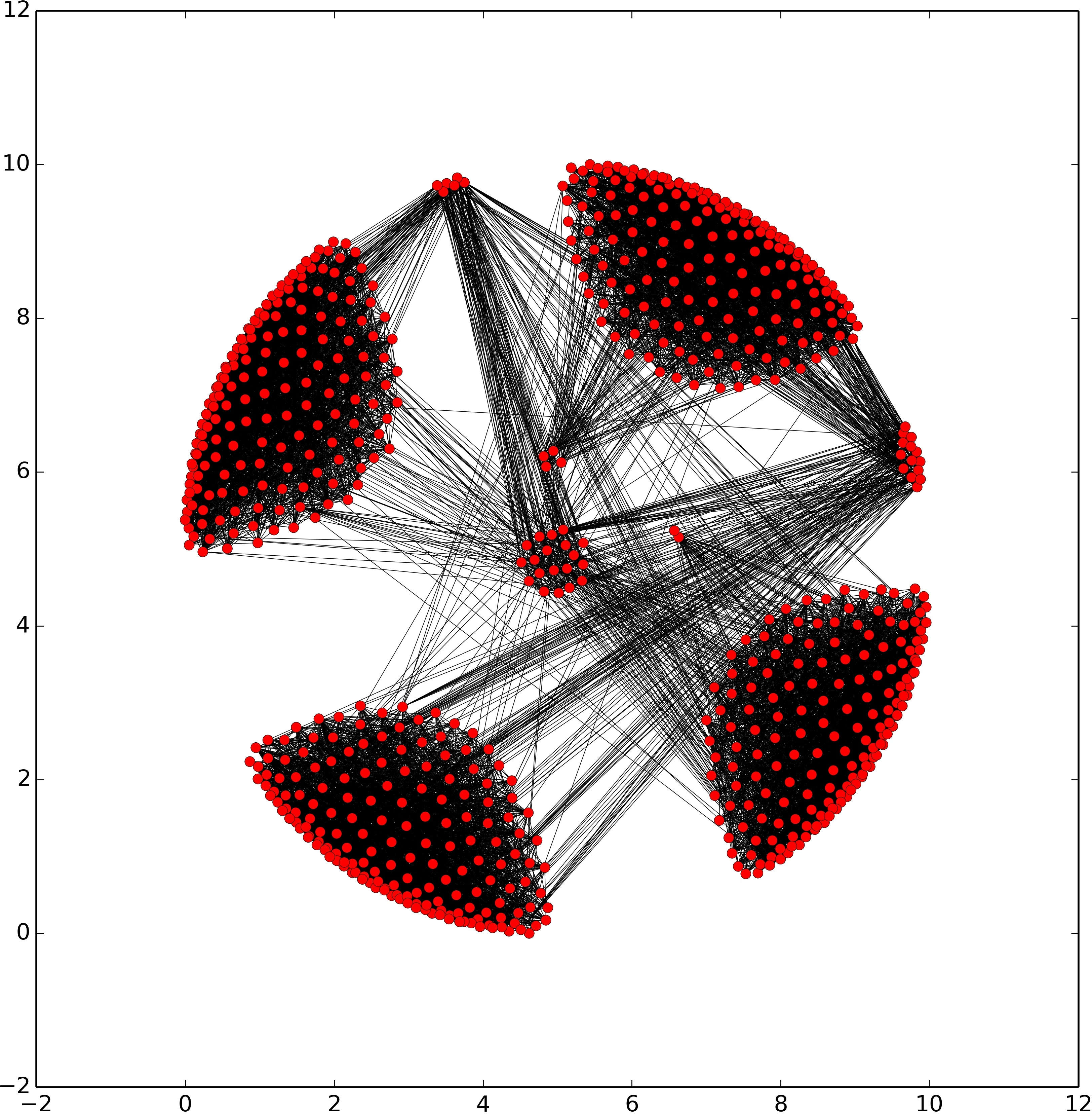}
    \caption{Left: Original graph extracted from the density matrix for the phenyl dendrimer molecular structure. Note the fractal-like structure of the graph. Right: Rearranged graph  by the partitions resulting from the METIS + SA algorithms. Only edges with weights larger than 0.01 were kept to ease visualization.\label{fig:globular}}
  \end{center}
\end{figure}

Figure~\ref{fig:globular} visualizes the relationship between the graph structure of a molecule (for the phenyl dendrimer molecule of Table~\ref{tab:systems}) and its graph partitioning obtained through \textit{METIS} and SA. After partitioning the molecular graph structure, the fractal-like structure of the phenyl dendrimer molecule and its dense components become clearly visible, as well as its sparse connections to other dense components. This structure is what our algorithm exploits to reduce the computations of the density matrix. Interestingly, SA sometimes dissolves entire blocks (meaning it produces blocks with no vertices), since such imbalanced partitionings still yield a further decrease of the objective function \eqref{eq:sumOfCubes}.

\subsection{Parallelized Implementation of G-SP2}
\label{subsection_scaling_protein}
We also assess the quality of the CH-partitions by measuring the speed-up when parallelizing the G-SP2 algorithm and applying it to real physical systems. In order to do this, the implementation of the G-SP2 algorithm of \citep{Niklasson2016} was modified to incorporate the graph partitioning step.

\begin{figure}
  \centering
  \includegraphics[width=0.8\textwidth]{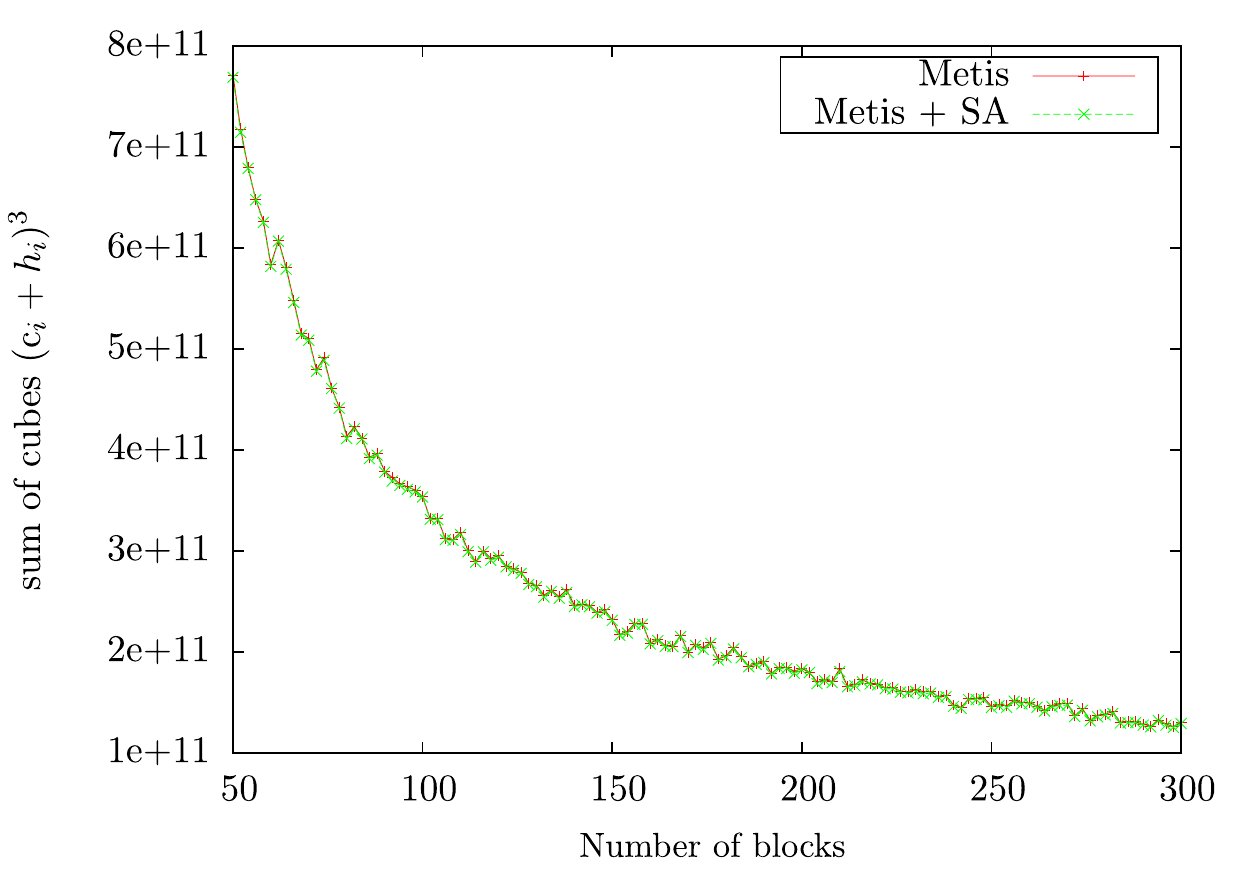}\\
  \includegraphics[width=0.8\textwidth]{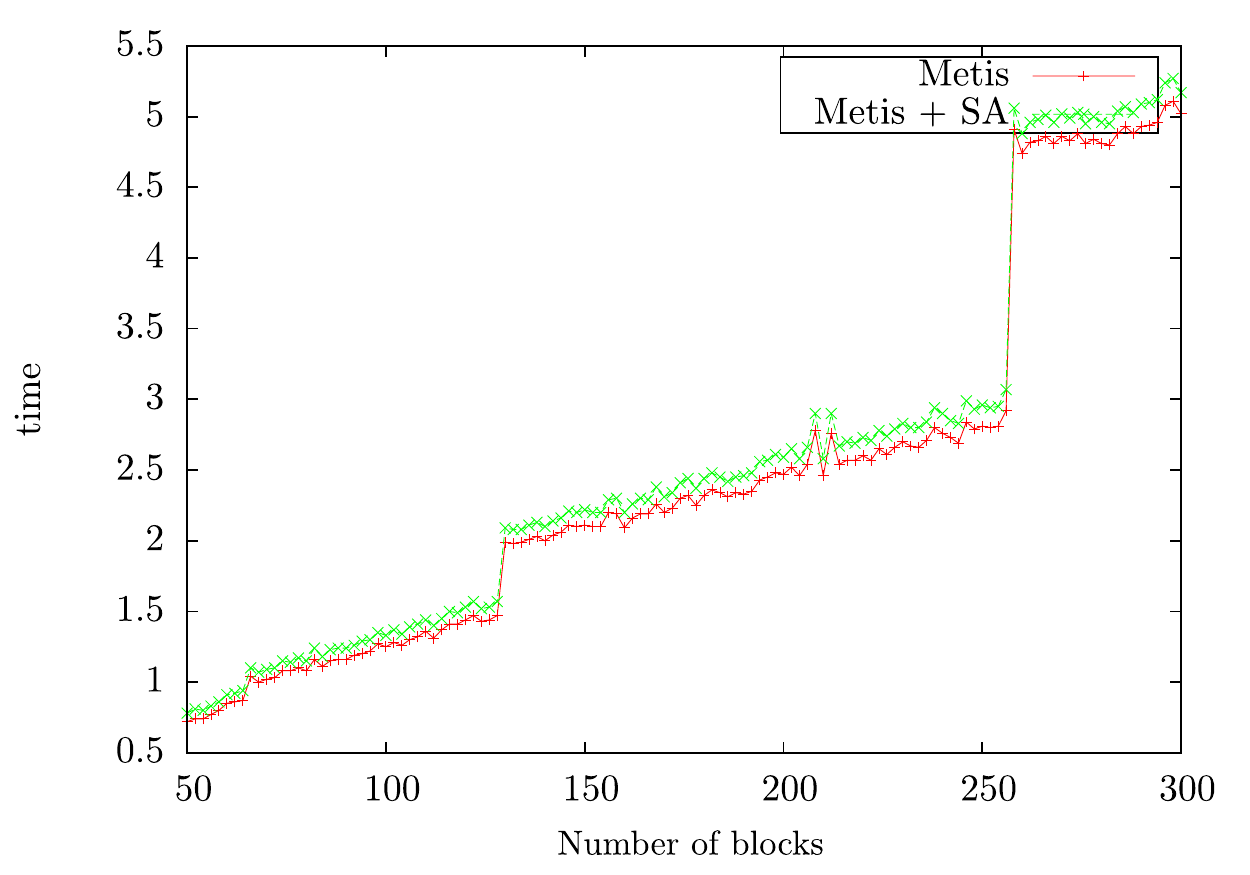} 
  \caption{Sum of cubes measure (top) and runtime to find the partitions (bottom) as a function of the number of blocks. Figure taken from \cite{Djidjev2016}. Copyright \textcopyright 2016 Society for Industrial and Applied Mathematics. Reprinted with permission. All rights reserved.\label{fig:protein_scale}}
\end{figure}

In particular, to measure the speed-up, we run \textit{METIS} and \textit{METIS+SA} on the \textit{polyalanine 259} protein system of Table~\ref{tab:systems} and record the obtained CH-partitions. Computations were carried out with the Wolf IC cluster of Los Alamos National Laboratory, whose computing nodes have 2 sockets each containing an 8-core Intel Xeon SandyBridge E5-2670, amounting to a total of 16 cores per computing node. A total of 32 GB of RAM are shared between the sixteen cores on each node, which are connected using a Qlogic Infiniband (IB) Quad Data Rate (QDR) network in a fat tree topology using a 7300 Series switch. Parallelization across nodes was done with OpenMPI (version 1.8.5), and parallelization across cores within a node was done with OpenMP~4.

The sum of cubes measure and the computing time are displayed in Figure~\ref{fig:protein_scale} as a function of the number of CH-partitions for the \textit{polyalanine 259} protein system. We observe in Figure~\ref{fig:protein_scale} (top) that the total effort of G-SP2, measured with the sum of cubes criterion, decreases monotonically as a function of the number of blocks and parallelized subproblems.

Figure~\ref{fig:protein_scale} (bottom) displays the computing time for the graph partitioning step alone. We observe that the partitioning effort increases with the number of blocks. The steps occurring at $65$, $129$, $257$ etc.\ partitions in the plot can easily be explained: Every time the number of blocks surpasses a power of two, the multilevel algorithm which \textit{METIS} is based on bisects the partitioning problem into one more (recursive) layer. Figure~\ref{fig:protein_scale} (bottom) also displays that employing the SA post-processing step only adds a minimal additional effort to the overall computation (when compared to the graph partitioning step alone). The usage of SA thus seems very sensible in light of the improvements it achieves when applied to the edge cut optimized partitions computed by a conventional algorithm. To summarize, Figure~\ref{fig:protein_scale} demonstrates that the total effort of the G-SP2 algorithm decreases when applied in parallel despite the increasing effort to compute a partitioning.

\subsection{Single Node SM-SP2 versus Parallelized Implementation of G-SP2}
\label{subsection_scaling_prot}

\begin{figure}
  \centering
  \includegraphics[width=0.7\textwidth]{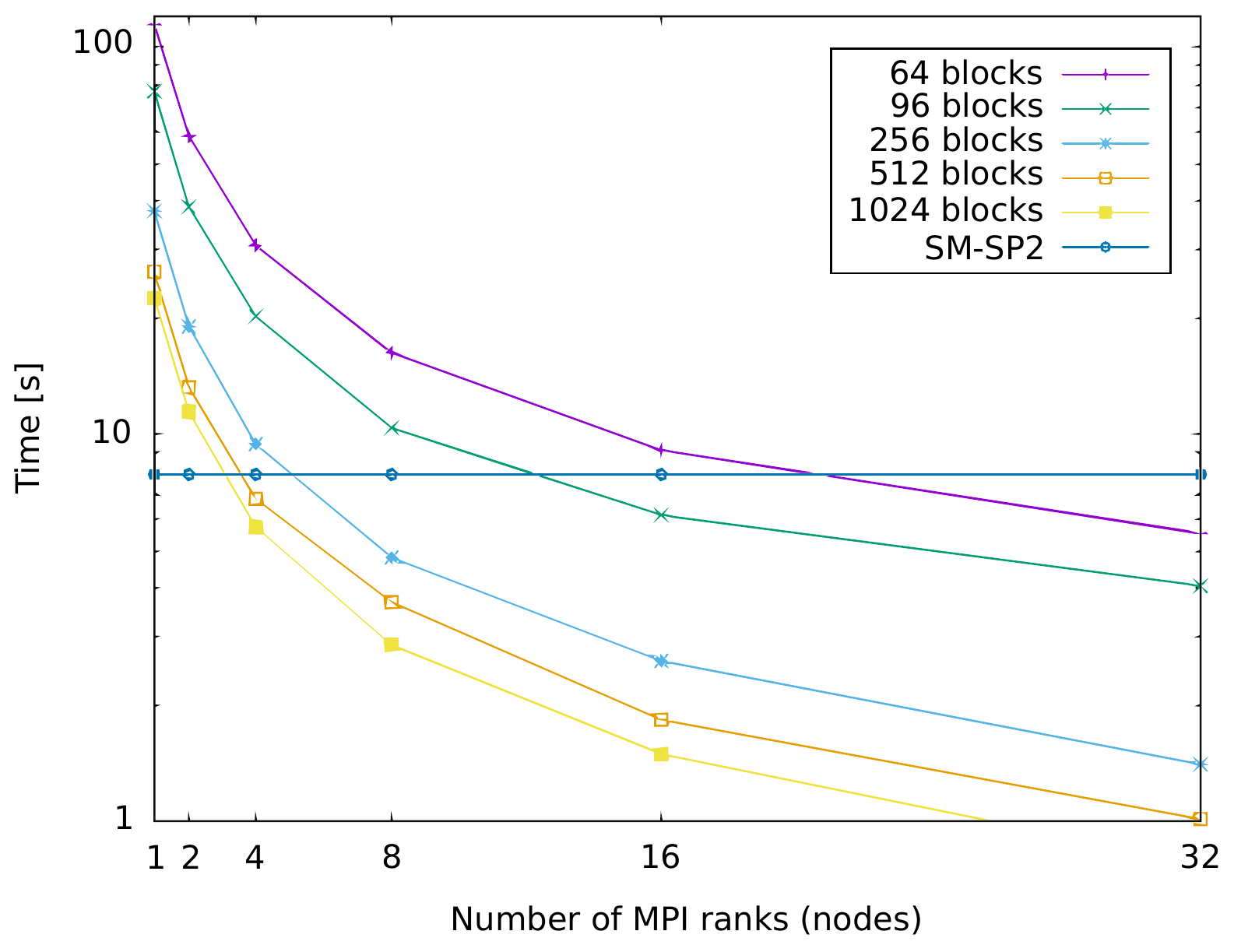}
  \caption{Runtime of parallelized G-SP2 as a function of the number of nodes. Different numbers of blocks. G-SP2 applied to the polyalanine 259 molecule. Figure taken from \cite{Djidjev2016}. Copyright \textcopyright 2016 Society for Industrial and Applied Mathematics. Reprinted with permission. All rights reserved.\label{fig:SueChristian}}
\end{figure}
Figure~\ref{fig:SueChristian} aims to quantify the computational savings over single node G-SP2 when running our proposed parallel SP2 algorithm. For this, Figure~\ref{fig:SueChristian} compares the runtime for our parallel G-SP2 on $1$-$32$ nodes against a threaded single node implementation of SM-SP2. We use G-SP2 here since the communication overhead exceeds the gain obtained by the extra computing power in a multi-node implementation, the main motivation for developing G-SP2 in the first place. As before, we employed \textit{METIS} with parameters specified in Section~\ref{subsection_parameters} together with SA for post-processing. The test system is again the polyalanine 259 molecule.

Figure~\ref{fig:SueChristian} shows that, as expected, both an increasing number of nodes and an increasing number of blocks decreases the G-SP2 runtime. When only few nodes are used for parallelization, the decrease in runtime is most pronounced since then, increasing the number of parallel nodes causes the runtime to drop sharply. For a higher number of nodes the curves somewhat flatten out.

Due to the overhead from the parallel G-SP2 computation, the parallelized run of SP2 is actually slower than the SM-SP2 computation on a single node for low numbers of nodes (between $4$ and $16$ nodes depending on the number of blocks). The runtime decreases with an increase in the number of nodes. Eventually, the effort falls below the one of a single node implementation. For the particular physical system of the polyalanine 259 molecule, a computational speed-up is only observed for at least $4$ nodes.

\subsection{Relationship between Molecular System and Partitions}
\label{subsection_partitioning_relationship}
\begin{figure}[t]
  \begin{center}
    \includegraphics[width=0.8\textwidth]{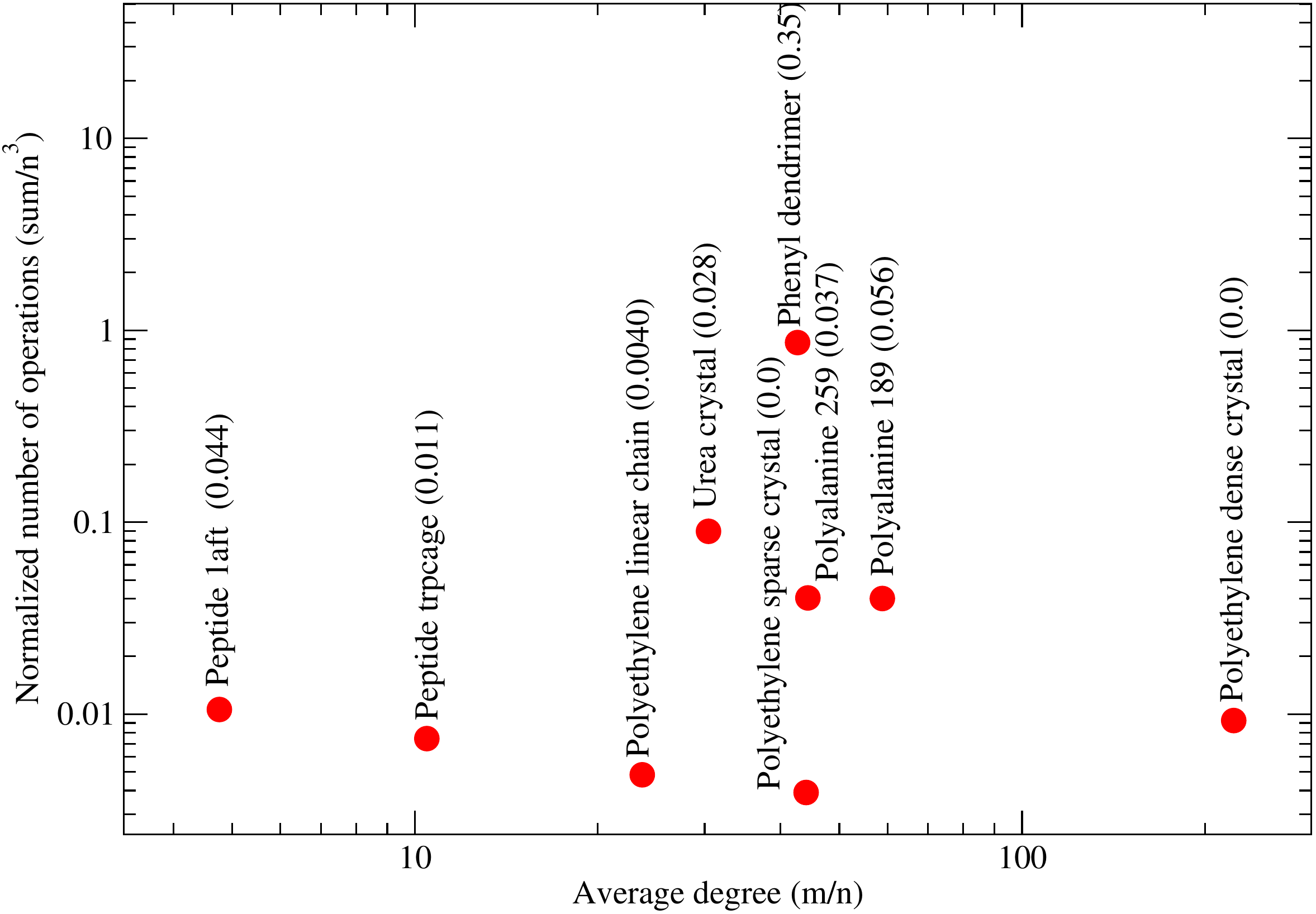}
    \caption{Sum of cubes criterion \eqref{eq:sumOfCubes} normalized by the complexity $n^3$ of the dense matrix-matrix multiplication for G-SP2 with METIS. Number of vertices $n$, number of edges $m$, and average degree $m/n$. Test systems of Table~\ref{tab:systems}. Brackets show fractions of $(\max-\min)/n$ (MMPN). Figure taken from \cite{Djidjev2016}. Copyright \textcopyright 2016 Society for Industrial and Applied Mathematics. Reprinted with permission. All rights reserved.\label{fig:matrices}}
  \end{center}
\end{figure}

Figure~\ref{fig:matrices} does not seem to exhibit a correlation between the molecular connectivity (measured with the average graph degree) and the normalized number of operations (NNO), defined as the sum of cubes criterion \eqref{eq:sumOfCubes} normalized by the complexity $n^3$ of the dense matrix-matrix multiplication. The observation also applies to the polyethylene dense crystal, whose normalized number of operations remains low although its average degree is high. Similar observations can be made for other molecules with a smaller average degree.

Our G-SP2 algorithm with METIS being used in the graph partitioning step is capable of finding the lowest NNO for 1-dimensional systems such as the polyethylene linear chain and the polyethylene sparse crystal (compare the MMPN values in the brakets in Figure~\ref{fig:matrices} to the measured normalized number of operations on the y-axis). According to \cite{TF9393500482}, regular agglomerates of polyethylene chains align along a particular direction with a large chain-to-chain distance.

We do not observe any advantage of our approach (measured via NNO) for \textit{regular} systems (polyethylene linear chain, polyethylene sparse crystal, polyethylene dense crystal and urea crystal).

A difficult case for our graph partitioning task is the phenyl dendrimer (expressed through its high NNO values). This is due to the fractal-like structure of the graph associated with the molecule.

Another class with a large NNO are proteins (solvated polyalanines). We conjecture that the large average node degree (in comparison to peptides, i.e.\ small proteins) is responsible for the large NNO measurements. Based on Figure~\ref{fig:matrices}, we further conjecture that the difference of maximum and minimum partition norms $(\max-\min)/n$ (MMPN) is correlated with the NNO. Here, the number of vertices is denoted as $n$, and the maximal and minimal sizes of the obtained blocks are denoted as $\max$ and $\min$, respectively. The conjectured correlation is clearly visible in Figure~\ref{fig:matrices}: the dendrimer tends to both large MMPN and NNO values, proteins exhibit intermediate values of both MMPN and NNO and finally, we observe low values of both MMPN and NNO for sparse ordered systems such as polyethyene chains.

\section{Discussion}
\label{section_discussion}
This paper speeds up the computation of the density matrix in MD simulations through parallelization, informed by graph partitioning applied to the structure graph underlying a molecule. Our experimental results are based on graphs derived from density matrices of physical systems.

In our article we focus on a certain flavor of the classical graph partitioning problem arising from molecular dynamics simulations. In contrast to classical edge cut partitioning, we minimize blocks with respect to both the number of their core vertices and the number of their neighbors in adjacent blocks (halos). To the best of our knowledge, this type of graph partitioning (which we coin \textit{CH-(core-halo)-partitioning}) has not been studied previously.

This work makes two contributions. First, the CH-partitioning problem under consideration is mathematically described and justified. We prove that the partitioned evaluation of a matrix polynomial is equivalent to the evaluation of the original (unpartitioned) matrix given a sufficient condition is satisfied. Second, we evaluate several approaches for computing CH-partitions using three performance measures: the total computational effort, the maximal effort per processor, and the overall computational runtime. Special focus is given to the post-processing of partitions obtained with conventional graph partitioning algorithms for which we use our own modified SA approach.

We find that our flavor of the partitioning problem can be solved using standard graph partitioning packages. Moreover, post-optimization of the partitions obtained through classical graph partitioning packages can be performed well with our SA scheme. As expected, the time to evaluate matrix polynomials for different system (graph) sizes decreases with both the number of processors and blocks in our simulations. Our main result is that the increased effort for graph partitioning and post-optimization with SA is beneficial overall when applying our parallelized version of the G-SP2 algorithm to meaningful physical systems. Based on our observation that \textit{METIS} with a SA post-processing step is significantly faster than competing methods while giving the best results on average, we recommend this combination for practical use. Of course, the research in this article can be always updated by including newer partitioning algorithms as time progresses, such as KaHyPar~\citep{KaHyPar}. Additionally, a consideration of the statistical significance of our results could be of interest as future work.

\section*{Acknowledgments}
The authors would like to thank Vivek B.\ Sardeshmukh for his help with various aspects related to graph partitioning, Purnima Ghale and Matthew Kroonblawd for their help with selecting meaningful physical datasets of real-world molecules, and Ben Bergen, Nick Bock, Marc Cawkwell, Christoph Junghans, Robert Pavel, Sergio Pino, Jerry Shi, and Ping Yang for their feedback.

\bibliographystyle{apalike}

\begin{thebibliography}{}
\bibitem[Bader et~al., 2013]{DBLP:conf/dimacs/2012}
Bader, D.~A., Meyerhenke, H., Sanders, P., and Wagner, D., editors (2013).
\newblock {\em Graph Partitioning and Graph Clustering - 10th {DIMACS}
  Implementation Challenge Workshop}, volume 588 of {\em Contemporary
  Mathematics}. AMS.

\bibitem[Bock and Challacombe, 2013]{Bock}
Bock, N. and Challacombe, M. (2013).
\newblock An optimized sparse approximate matrix multiply for matrices with
  decay.
\newblock {\em SIAM Journal on Scientific Computing}, 35(1):C72--C98.

\bibitem[Borstnik et~al., 2014]{Hutter}
Borstnik, U., VandeVondele, J., Weber, V., and Hutter, J. (2014).
\newblock Sparse matrix multiplication: The distributed block-compressed sparse
  row library.
\newblock {\em Parallel Computing}, 40:47--58.

\bibitem[Bunn, 1939]{TF9393500482}
Bunn, C. (1939).
\newblock {The crystal structure of long-chain normal paraffin hydrocarbons.
  The "shape" of the CH2 group.}
\newblock {\em Trans. Faraday Soc.}, 35:482--491.

\bibitem[Djidjev et~al., 2016]{Djidjev2016}
Djidjev, H.~N., Hahn, G., Mniszewski, S.~M., Negre, C.~F., Niklasson, A.~M.,
  and Sardeshmukh, V. (2016).
\newblock Graph partitioning methods for fast parallel quantum molecular
  dynamics (full text with appendix).
\newblock {\em SIAM Workshop on Combinatorial Scientific Computing (CSC16)}.

\bibitem[Elstner et~al., 1998]{Elstner1998}
Elstner, M., Porezag, D., Jungnickel, G., Elsner, J., Haugk, M., Frauenheim,
  T., Suhai, S., and Seifert, G. (1998).
\newblock Self-consistent-charge density-functional tight-binding method for
  simulations of complex materials properties.
\newblock {\em Phys. Rev. B}, 58(11):7260--7268.

\bibitem[Fiduccia and Mattheyses, 1982]{FiducciaMattheyses1982}
Fiduccia, C. and Mattheyses, R. (1982).
\newblock A linear time heuristic for improving network partitions.
\newblock {\em In Proc. 19th IEEE Design Automation Conference}, pages
  175--181.

\bibitem[Finnis et~al., 1998]{MFinnis98}
Finnis, M.~W., Paxton, A.~T., Methfessel, M., and van Schilfgarde, M. (1998).
\newblock Crystal structures of zirconia from first principles and
  self-consistent tight binding.
\newblock {\em Phys. Rev. Lett.}, 81:5149.

\bibitem[Ford~Jr. and Fulkerson, 1956]{FordFulkerson1956}
Ford~Jr., L. and Fulkerson, D. (1956).
\newblock Maximal flow through a network.
\newblock {\em Canad. J. Math.}, 8:399--404.

\bibitem[Frauenheim et~al., 2000]{TFrauenheim00}
Frauenheim, T., Seifert, G., aand Z.~Hajnal, M.~E., Jungnickel, G., Poresag,
  D., Suhai, S., and Scholz, R. (2000).
\newblock A self-consistent charge density-functional based tight-binding
  method for predictive materials simulations in physics, chemistry and
  biology.
\newblock {\em Phys. Stat. sol.}, 217:41.

\bibitem[Karypis and Kumar, 1998]{hmetis_manual}
Karypis, G. and Kumar, V. (1998).
\newblock A hypergraph partitioning package.

\bibitem[Karypis and Kumar, 1999]{KarypisKumar1999}
Karypis, G. and Kumar, V. (1999).
\newblock A fast and high quality multilevel scheme for partitioning irregular
  graphs.
\newblock {\em SIAM J. Sci. Comput.}, 20(1):359--392.

\bibitem[Karypis and Kumar, 2000]{KarypisKumar2000}
Karypis, G. and Kumar, V. (2000).
\newblock Multilevel k-way hypergraph partitioning.
\newblock {\em VLSI Design}, 11(3):285--300.

\bibitem[Kirkpatrick et~al., 1983]{Kirkpatrick1983}
Kirkpatrick, S., Gelatt~Jr, C., and Vecchi, M. (1983).
\newblock Optimization by simulated annealing.
\newblock {\em Science}, 200(4598):671--680.

\bibitem[Mniszewski et~al., 2015]{Mniszewski2015}
Mniszewski, S.~M., Cawkwell, M.~J., Wall, M., Moyd-Yusof, J., Bock, N.,
  Germann, T., and Niklasson, A.~M. (2015).
\newblock Efficient parallel linear scaling construction of the density matrix
  for born-oppenheimer molecular dynamics.
\newblock {\em J Chem Theory Comput}, 11(10):4644--4654.

\bibitem[Niklasson, 2002]{Niklasson2002}
Niklasson, A.~M. (2002).
\newblock Expansion algorithm for the density matrix.
\newblock {\em Phys. Rev. B}, 66(15):155115--155121.

\bibitem[Niklasson et~al., 2016]{Niklasson2016}
Niklasson, A.~M., Mniszewski, S.~M., Negre, C.~F., Cawkwell, M.~J., Swart,
  P.~J., Mohd-Yusof, J., Germann, T.~C., Wall, M.~E., Bock, N., Rubensson,
  E.~H., and Djidjev, H.~N. (2016).
\newblock Graph-based linear scaling electronic structure theory.
\newblock {\em J Chem Phys}, 144(23):234101.

\bibitem[P{\i}nar and Hendrickson, 2001]{PinarHendrickson2001}
P{\i}nar, A. and Hendrickson, B. (2001).
\newblock Partitioning for complex objectives.
\newblock {\em Proceedings of the 15th International Parallel and Distributed
  Processing Symposium (CDROM), IEEE Computer Society, Washington, DC, USA},
  pages 1--6.

\bibitem[Sanders and Schulz, 2011]{SandersSchulz2011}
Sanders, P. and Schulz, C. (2011).
\newblock {Engineering multilevel graph partitioning algorithms}.
\newblock In {\em LNCS}, volume 6942, pages 469--480.

\bibitem[Sanders and Schulz, 2013]{sandersschulz2013}
Sanders, P. and Schulz, C. (2013).
\newblock {Think Locally, Act Globally: Highly Balanced Graph Partitioning}.
\newblock In {\em International Symposium on Experimental Algorithms (SEA)},
  volume 7933 of {\em LNCS}, pages 164--175. Springer.

\bibitem[Schlag et~al., 2015]{KaHyPar}
Schlag, S., Henne, V., Heuer, T., Meyerhenke, H., Sanders, P., and Schulz, C.
  (2015).
\newblock k-way hypergraph partitioning via n-level recursive bisection.
\newblock {\em arXiv:1511.03137}, pages 1--21.

\bibitem[VandeVondele et~al., 2012]{VandeVondele}
VandeVondele, J., Borštnik, U., and Hutter, J. (2012).
\newblock Linear scaling self-consistent field calculations with millions of
  atoms in the condensed phase.
\newblock {\em Journal of Chemical Theory and Computation}, 8(10):3565--3573.

\bibitem[von Looz et~al., 2016]{Looz2016}
von Looz, M., Wolter, M., Jacob, C.~R., and Meyerhenke, H. (2016).
\newblock Better partitions of protein graphs for subsystem quantum chemistry.
\newblock {\em arXiv:1606.03427}, pages 1--20.
\end{thebibliography}

\appendix
\section{Proofs for Section \ref{section_theory}}
\label{appendix_proofs}
For any graph $I$ and vertex $v \in I$, the \textit{neighborhood} of $v$ in $I$ is the set $N(v,I)=\{w\in V(I)~|~(v,w)\in E(I)\}$. Let $H=\PP(G)$, $v$ be a vertex of $G$, and $H_{v}$ denote the subgraph of $H$ induced by $N(v,H)$. For the following lemmas we assume that $T_i\cap E(H)=\emptyset$ for all $i$, i.e., none of the edges in $H=\PP(G)$ are thresholded. (With thresholding, the partitioned matrix algorithm is not exact, but approximate.) We have the following properties.

\begin{lemma}
  \label{lem:neighb}
  Let $v$ be a vertex of $G$. Then $N(v,\PP(G))=N(v,\PP(H_v))$.
\end{lemma}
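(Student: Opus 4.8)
The plan is to prove the equality by two inclusions, one of which is essentially immediate, after fixing a combinatorial reading of $\PP$. Since the definition of $\PP(\cdot)$ explicitly excludes coincidental cancellations, $\PP(K)$ is exactly the graph obtained from $K$ by alternately deleting an edge set $E(T_i)$ (the action of $T_i$ on graphs) and squaring the graph, where squaring keeps all loops; in particular a squaring step can only \emph{add} edges, since $(a,c)\in E(K)$ together with the loop $(a,a)\in E(K)$ forces $(a,c)$ into the square. Formally, reading the existential definition of $\PP$ in this way is a routine genericity argument: a generic matrix with the prescribed support realizes the claimed support at every intermediate stage, and the finitely many dense conditions involved still intersect. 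A first consequence: every diagonal entry of $A$ is nonzero, so $G=G(A)$ has a loop at every vertex; since no $E(T_i)$ contains a loop and squaring preserves loops, every vertex of $H=\PP(G)$, and hence of the induced subgraph $H_v$, carries a loop. In particular $v\in N(v,H)=V(H_v)$.

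Given this, the inclusion $N(v,\PP(H_v))\subseteq N(v,\PP(G))$ comes for free: $\PP(H_v)$ has the same vertex set as $H_v$, namely $V(H_v)=N(v,H)$, so any neighbour of $v$ in $\PP(H_v)$ already lies in $N(v,H)=N(v,\PP(G))$.

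For the reverse inclusion, take $w\in N(v,\PP(G))=N(v,H)$, so $(v,w)\in E(H)$; since $v,w\in V(H_v)$ and $H_v$ is an \emph{induced} subgraph of $H$, we also get $(v,w)\in E(H_v)$. Now follow the edge $(v,w)$ through the computation of $\PP(H_v)$ starting from $H_v$. Each squaring step only adds edges and preserves the loop at $v$, so it keeps $(v,w)$; each thresholding step $T_i$ deletes only edges in $E(T_i)$, but $E(T_i)\cap E(H)=\emptyset$ by the standing hypothesis and $(v,w)\in E(H)$, so $(v,w)\notin E(T_i)$ (and loops lie outside $E(T_i)$ as well). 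Hence $(v,w)$ is present at the outset and is deleted by no operation, so $(v,w)\in E(\PP(H_v))$, i.e. $w\in N(v,\PP(H_v))$. Combining the two inclusions yields the claim.

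The only step with real content is the preliminary reduction of $\PP$ to the \emph{delete $E(T_i)$, then square keeping loops} graph operator, together with the monotonicity of squaring in the presence of loops; everything after that is bookkeeping. The crux of the lemma is that the hypothesis $E(T_i)\cap E(H)=\emptyset$ is precisely what guarantees that the key edge $(v,w)$ cannot be thresholded away inside the local computation $\PP(H_v)$, while the vertex-set identity $V(\PP(H_v))=N(v,H)$ disposes of the other inclusion automatically.
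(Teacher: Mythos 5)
Your proof is correct and takes essentially the same route as the paper's: the inclusion $N(v,\PP(H_v))\subseteq N(v,\PP(G))$ falls out of the vertex-set identity $V(\PP(H_v))=V(H_v)=N(v,H)$, and the reverse inclusion is obtained by noting $(v,w)\in E(H_v)$ and that neither thresholding (by $E(T_i)\cap E(H)=\emptyset$) nor squaring (thanks to the loops) can remove this edge from the worst-case support. The only difference is that you make explicit the combinatorial reading of $\PP$ and the monotonicity of the squaring step, which the paper invokes implicitly in one sentence.
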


\begin{proof}
First we prove that $N(v,\PP(G))\subseteq N(v,\PP(H_v))$. Let $w\in N(v,\PP(G))$. Then $(v,w)\in E(\PP(G))$ and hence $(v,w)\in E(H_v)\subseteq E(H)$. Since by assumption $T_i\cap E(H)=\emptyset$, $(v,w)\not\in T_i$ for all $i$. From $(v,w)\in E(H_v)$, the last relation, and the fact that all vertices of $H$ have loops, $(v,w)\in E(\PP(H_v))$. Hence, $w\in N(v,\PP(H_v))$.

Now we prove that $N(v,\PP(H_v))\subseteq N(v,\PP(G))$. Let $w\in N(v,\PP(H_v))$. Since $\PP(H_v)$ and $H_v$ have the same vertex sets, we have $w\in N(v,H_v)$. Furthermore, since $H_v$ is a subgraph of $H$, $w\in N(v,H)=N(v,\PP(G))$.
\end{proof}

The lemma shows that $v$ has the same neighbors in $\PP(G)$ and $\PP(H_v)$, i.e., their corresponding matrices have nonzero entries in the same positions in the row (or column) corresponding to $v$. We will next strengthen that claim by showing that the corresponding nonzero entries contain equal values.

Let $X_{v}$ be the submatrix of $A$ defined by all rows and columns that correspond to vertices of $V(H_{v})$. We will call vertex $v$ the \emph{core} and the remaining vertices the \textit{halo} of $V(H_{v})$. We define the set $\{V(H_{v})~|~v\in G\}$ to be the \emph{CH-partition} of $G$. Note that, unlike other definitions of a partition used elsewhere, the vertex sets of CH-partitions (and, specifically, the halos) can be, and typically are, overlapping.

\begin{lemma}
  \label{lem:matrixValues}
  For any $v\in V(G)$ and any $w\in N(v,\PP(G))$, the element of $P(A)$ corresponding to the edge $(v,w)$ of $\PP(G)$ is equal to the element of $P(X_v)$ corresponding to the edge $(v,w)$ of $\PP(H_v)$.
\end{lemma}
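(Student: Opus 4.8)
The goal is to show that the single matrix entry of $P(A)$ at position $(v,w)$, for $w$ a neighbor of $v$ in $\PP(G)$, is computed exactly by evaluating $P$ on the small submatrix $X_v$ and reading off the $(v,w)$ entry there. The plan is to induct on the number $s$ of superposed blocks $P_i \circ T_i$ in the thresholded polynomial $P$, i.e., on the degree $2^s$. Since $P = P_1 \circ T_1 \circ \cdots \circ P_s \circ T_s$, it is natural to peel off the last (innermost) block $P_s \circ T_s$: set $A' = (P_s \circ T_s)(A)$ and $X_v' = (P_s \circ T_s)(X_v)$, and apply the induction hypothesis to the shorter polynomial $P' = P_1 \circ T_1 \circ \cdots \circ P_{s-1} \circ T_{s-1}$ acting on $A'$ and on the submatrix of $A'$ induced by $V(H_v)$ — except that we must be careful, because the relevant ``previous-step'' sparsity graph changes along the recursion, and $X_v'$ need not be exactly the submatrix of $A'$ we would want. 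So the cleaner approach is to prove a stronger, self-strengthening statement by induction.

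\textbf{What I would actually prove by induction.} The strengthened claim: for every $v$ and every $w \in N(v,\PP(G))$, (i) the entry $P(A)_{vw}$ equals $P(X_v)_{vw}$, and simultaneously (ii) the row of $P(X_v)$ indexed by $v$ has its support contained in $N(v,\PP(G))$ — this second part is exactly Lemma~\ref{lem:neighb}, which we may invoke, but we need an analogous statement at every intermediate stage of the recursion, not just at the top. The key structural fact driving the induction is locality of a degree-$2$ polynomial step: the $(v,\cdot)$ row of $P_s(M)$ depends only on the rows of $M$ indexed by $N(v, G(M))$, because squaring a matrix mixes row $v$ only with those rows $u$ for which $M_{vu} \neq 0$. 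Combined with the assumption that thresholding never removes an edge of $H = \PP(G)$ (stated before the lemmas), one shows that $G(A')$ restricted to the closed neighborhood of $v$ matches $G(X_v')$ restricted to the same vertex set, so the submatrix $X_v$ ``already contains'' everything needed to carry out the remaining $s-1$ steps correctly for the $v$-row. Formally: after applying $P_s \circ T_s$, the submatrix of $A'$ on $V(H_v)$ agrees entrywise with $X_v'$ on every row indexed by a vertex whose $\PP$-neighborhood still lies inside $V(H_v)$, and in particular on row $v$; then the induction hypothesis for $P'$ finishes the job.

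\textbf{Base case and bookkeeping.} The base case $s=0$ (identity polynomial, or a single squaring) is immediate: $P(A)_{vw} = A_{vw}$ and $X_v$ literally contains that entry since $v,w \in V(H_v)$. For $s=1$, one checks directly that $(A^2)_{vw} = \sum_u A_{vu}A_{uw}$, and every $u$ contributing a nonzero term has $A_{vu}\neq 0$, hence $u \in N(v,G) \subseteq N(v,H) = V(H_v)\setminus\{v\}\cup\{v\}$, wait — more precisely $u$ is a neighbor of $v$ in $G$ so $u\in V(H_v)$, and likewise the partial sum over $V(H_v)$ captures all nonzero contributions; the thresholding step $T_1$ then acts identically on $P(A)_{vw}$ and on $P(X_v)_{vw}$ since by hypothesis it does not touch edges of $H$, and $(v,w)\in E(H)$. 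The general inductive step is the same argument applied to the outermost block, using locality plus the no-thresholding-on-$H$ hypothesis to guarantee the intermediate sparsity patterns agree on the piece of the graph we care about.

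\textbf{Main obstacle.} The delicate point is the intermediate steps: at stage $k$ of the recursion the ``relevant neighborhood'' of $v$ might in principle try to reach outside $V(H_v)$, since the matrix $(P_k\circ T_k\circ\cdots)(A)$ could have fill-in. The argument must show this does not happen for the $v$-row specifically — i.e., that $H = \PP(G)$, being by construction an upper bound on the sparsity of $P(A)$ that already accounts for all fill-in through all $s$ steps (excluding coincidental cancellations), ``contains'' every partial product's support restricted to row $v$, so $N(v,G(A_k))\subseteq V(H_v)$ at every stage $k$. This is really where the definition of $\PP(G)$ as the worst-case cumulative sparsity graph does the work, and stating and chaining the intermediate containment claims carefully (rather than hand-waving ``locality'') is the part that needs the most care. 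Once that monotonicity of supports is pinned down, everything else is the routine induction sketched above.
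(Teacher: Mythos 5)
Your proof follows essentially the same route as the paper's: induction on the number $s$ of blocks $P_i\circ T_i$, peeling off the outermost degree-$2$ step, and using the locality of matrix squaring together with Lemma~\ref{lem:neighb} and the standing assumption that thresholding never removes an edge of $H=\PP(G)$. The ``strengthened induction hypothesis'' you flag as the delicate point (entrywise agreement of the intermediate matrices on all relevant rows of $V(H_v)$, not just row $v$, plus containment of the intermediate supports in $\PP(G)$) is exactly what the paper's proof uses implicitly when it asserts that the corresponding elements of $A'=P'(A)$ and $X'=P'(X_v)$ agree and sums over $u$ with $(v,u),(u,w)\in E(\PP(G))$, so your extra care is warranted but does not constitute a different argument.
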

\begin{proof}
Let $2^s$ be the degree of $P$. We will prove the lemma by induction on $s$. Clearly, the claim is true for $s=0$ since the elements of both $A^1$ and $X^1$ are original elements of the matrix $A$. Assume the claim is true for $s-1$. Define $P'=P_1\mathsmaller{\circ} T_1\mathsmaller{\circ}\dots\mathsmaller{\circ} P_{s-1}\mathsmaller{\circ} T_{s-1}$. By the inductive assumption, the corresponding elements in the matrices $A'=P'(A)$ and $X'=P'(X)$ have equal values. We need to prove the same for the elements of $A'^2$ and $X'^{2}$.

Let $(v,w)\in E(\PP(G))$. By Lemma~\ref{lem:neighb}, $(v,w)\in E(\PP(H_v))$. For each vertex $u$ of $\PP(H_v)$ let $u'$ denote the corresponding row/column of $X$. We want to show that $P(A)(v,w)=P(X)(v',w')$.

By definition of matrix product, $A'^2(v,w)=\sum A'(v,u) A'(u,w)$, where the summation is over all $u$ such that $(v,u),(u,w)\in E(\PP(G))$. Similarly, $X'^2(v',w')=\sum X'(v',u')X'(u',w')$, where the summation is over the values of $u'$ corresponding to the values of $u$ from the previous formula, by Lemma~\ref{lem:neighb}. By the inductive assumption, $A'(v,u)=X'(v',u')$ and $A'(u,w)=X'(u',w')$, thus $A'^2(v,w)=X'^2(v',w')$. Since by assumption $A'(v,w)=X'(v',w')$, we have $P_s(A')(v,w)=P_s(X')(v',w')$, and hence $P(A)(v,w)=(P_s\mathsmaller{\circ}T_s)(A')(v,w)=(P_s\mathsmaller{\circ}T_s)(X')(v',w')=P(X)(v',w')$.
\end{proof}

Clearly, in many cases it will be advantageous to consider CH-partitions whose cores contain multiple vertices. We will next show that the above approach for CH-partitions with single-node cores can be generalized to the multi-node core case.

We will generalize the definitions of $N(v,\PP(G))$ and $N(v,\PP(H_v))$ for the case where the vertex $v$ is replaced by a set $U$ of vertices of $G$. For any graph $I$, we define $N(U,I)=\bigcup_{v\in U}N(v,I)$. Furthermore, we define by $H_U$ the subgraph of $H$ induced by $N(U,H)$.

Suppose the sets $\{U_i~|~i=1,\dots,q\}$ are such that $\bigcup_{i}U_i=V(G(A))$ and $U_i\cap U_j=\emptyset$. In this case we can define a CH-partition of $G=G(A)$ consisting of $q$ sets, where for each $i$, $U_i$ is the core and $N(U_i,H)\setminus U_i$ is the halo of $\Pi_i$.

The following generalizations of Lemma~\ref{lem:neighb} and Lemma~\ref{lem:matrixValues} follow in a straightforward manner.

\begin{lemma}
  \label{lem:neighbSet}
  Denote by $H_i$ the subgraph $\PP(H_{U_i})$ of $G$. Let $v$ be a vertex of $U_i$. Then $N(v,\PP(G))=N(v,H_i)$.
\end{lemma}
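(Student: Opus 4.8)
The plan is to reduce the set version to the single-vertex version already established in Lemma~\ref{lem:neighb}. Fix $i$ and a vertex $v\in U_i$. The key observation is that $H_{U_i}$, the subgraph of $H=\PP(G)$ induced by $N(U_i,H)=\bigcup_{u\in U_i}N(u,H)$, contains as a subgraph the smaller induced subgraph $H_v$ (induced by $N(v,H)$ alone), since $N(v,H)\subseteq N(U_i,H)$. So intuitively, enlarging the core from $\{v\}$ to $U_i$ only adds vertices and edges to the relevant submatrix/subgraph, and this extra material cannot change which entries are nonzero in the row of $P$ corresponding to $v$ — that is exactly the content we want.

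The steps, in order, would be: (1) Show $N(v,H_i)\subseteq N(v,\PP(G))$. Since $H_i=\PP(H_{U_i})$ has the same vertex set as $H_{U_i}$, any $w\in N(v,H_i)$ lies in $V(H_{U_i})$, and since $(v,w)\in E(\PP(H_{U_i}))$ we get $(v,w)\in E(H_{U_i})\subseteq E(H)$ after observing that none of the edges of $H$ are thresholded (the standing assumption $T_j\cap E(H)=\emptyset$ for all $j$, together with the fact that $H_{U_i}$ is an induced subgraph of $H$). Hence $w\in N(v,H)=N(v,\PP(G))$. (2) Show $N(v,\PP(G))\subseteq N(v,H_i)$. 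Let $w\in N(v,\PP(G))=N(v,H)$. Then $(v,w)\in E(H)$, and since $w\in N(v,H)\subseteq N(U_i,H)=V(H_{U_i})$ while also $v\in U_i\subseteq V(H_{U_i})$, the edge $(v,w)$ survives in the induced subgraph: $(v,w)\in E(H_{U_i})$. Because $T_j\cap E(H)=\emptyset$ implies $(v,w)\notin T_j$ for all $j$, and because every vertex of $H$ (hence of $H_{U_i}$) carries a loop, the degree-$2$ polynomial steps and thresholding steps composing $P$ cannot destroy this edge, so $(v,w)\in E(\PP(H_{U_i}))=E(H_i)$, i.e.\ $w\in N(v,H_i)$. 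Combining (1) and (2) gives the claimed equality.

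I do not expect a genuine obstacle here; the argument is a direct transcription of the proof of Lemma~\ref{lem:neighb} with $H_v$ replaced by $H_{U_i}$. The one point that needs a little care is step~(2): one must check that the edge $(v,w)\in E(H)$ is actually present in the \emph{induced} subgraph $H_{U_i}$, which requires both endpoints to be vertices of $H_{U_i}$ — this holds because $v\in U_i$ and $w$ is a neighbor of $v$ and hence lies in $N(U_i,H)$. The other subtlety, shared with Lemma~\ref{lem:neighb}, is the appeal to the loop edges and to the no-thresholding-on-$H$ assumption to argue that edges of $H_{U_i}$ persist under application of $\PP(\cdot)$; once these are invoked exactly as before, the proof closes.
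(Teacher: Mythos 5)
Your step (2), the inclusion $N(v,\PP(G))\subseteq N(v,H_i)$, is correct and is the faithful transcription of the paper's argument for Lemma~\ref{lem:neighb}: since $v\in U_i\subseteq V(H_{U_i})$ (loops) and $w\in N(v,H)\subseteq N(U_i,H)=V(H_{U_i})$, the edge $(v,w)$ survives into the induced subgraph $H_{U_i}$, and the loop and no-thresholding assumptions make it persist under $\PP$. Note that the paper itself gives no written proof of Lemma~\ref{lem:neighbSet} --- it only asserts that the generalization ``follows in a straightforward manner'' --- so the only benchmark is the single-core proof of Lemma~\ref{lem:neighb}.

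Step (1) has a genuine gap. You claim that $(v,w)\in E(\PP(H_{U_i}))$ implies $(v,w)\in E(H_{U_i})$, citing the assumption $T_j\cap E(H)=\emptyset$. That assumption yields the \emph{opposite} inclusion $E(H_{U_i})\subseteq E(\PP(H_{U_i}))$ (existing edges are not destroyed); it says nothing about fill-in, and applying $\PP$ to a graph in general \emph{creates} edges: a path $v$--$u$--$w$ in $H_{U_i}$ with $(v,w)\notin E(H_{U_i})$ produces the edge $(v,w)$ in $E(\PP(H_{U_i}))$ after a single squaring. The paper's proof of Lemma~\ref{lem:neighb} does not argue this direction via edge containment at all; it uses the vertex-set identity $V(H_v)=N(v,H)$, so that \emph{every} vertex of $\PP(H_v)$ --- in particular every neighbor of $v$ there, fill-in or not --- is automatically a neighbor of $v$ in $H$. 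That device does not transcribe to multi-vertex cores, because $V(H_{U_i})=N(U_i,H)$ strictly contains $N(v,H)$ in general: a fill-in edge from $v$ to some $w\in N(U_i,H)\setminus N(v,H)$ is precisely what must be excluded, and neither your justification nor a verbatim copy of the single-core argument excludes it. Closing this direction needs an additional ingredient --- essentially that $H=\PP(G)$ is closed under the fill-in of $P$, so that a length-two path $v$--$u$--$w$ in $H$ already forces $(v,w)\in E(H)$ --- which is not among the hypotheses you invoke. So the backward inclusion is the genuinely non-straightforward half of the generalization, and your proposal does not establish it.
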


Denote by $A_{U_i}$ the submatrix of $A$ consisting of all rows and columns that correspond to vertices of $V(H_{U_i})$. The following main result of this section shows that $P(A)$ can be computed on submatrices of the Hamiltonian: this insight justifies the parallelized evaluation of a matrix polynomial.

\begin{lemma}
  \label{lem:matrixValuesMulti}
  For any $v\in U_i$ and any $w\in N(v,\PP(G))$, the element of $P(A)$ corresponding to the edge $(v,w)$ of $\PP(G)$ is equal to the element of $P(A_{U_i})$ corresponding to the edge $(v,w)$ of $H_i$.
\end{lemma}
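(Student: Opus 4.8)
The plan is to reduce Lemma~\ref{lem:matrixValuesMulti} to the single-core case already established in Lemma~\ref{lem:matrixValues}. The key observation is that the submatrix $A_{U_i}$, which is indexed by $V(H_{U_i}) = N(U_i,H)$, contains as a submatrix the matrix $X_v$ for every $v \in U_i$: indeed, $X_v$ is indexed by $V(H_v) = N(v,H) \subseteq N(U_i,H) = V(H_{U_i})$. So the first step is to fix an arbitrary $v \in U_i$ and observe that $X_v$ sits inside $A_{U_i}$. The hope is then to argue that the polynomial $P$ evaluated on the larger matrix $A_{U_i}$ agrees, in the rows and columns indexed by $V(H_v)$, with $P$ evaluated on the smaller matrix $X_v$ — and then invoke Lemma~\ref{lem:matrixValues} to identify that with the corresponding entry of $P(A)$.

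Concretely, the steps I would carry out are: (1) Fix $v \in U_i$ and $w \in N(v,\PP(G))$. By Lemma~\ref{lem:neighbSet}, $(v,w) \in E(H_i) = E(\PP(H_{U_i}))$, and by Lemma~\ref{lem:neighb}, $(v,w) \in E(\PP(H_v))$ as well. (2) Prove an intermediate claim: for $v \in U_i$, the entry of $P(A_{U_i})$ indexed by $(v,w)$ equals the entry of $P(X_v)$ indexed by $(v,w)$. This should follow by exactly the same induction on the degree $2^s$ of $P$ as in the proof of Lemma~\ref{lem:matrixValues} — the crucial input is again a neighborhood-stability statement, namely that $N(v,\PP(H_{U_i})) = N(v,\PP(H_v))$ for $v \in U_i$, so that when one expands the matrix square $A'^2(v,w) = \sum_u A'(v,u)A'(u,w)$ the summation ranges over exactly the same index set $u$ whether one works inside $H_{U_i}$ or inside $H_v$. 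The no-thresholding assumption $T_i \cap E(H) = \emptyset$ and the presence of loops on every vertex of $H$ are what keep these neighborhoods from shrinking under the operators. (3) Chain the intermediate claim with Lemma~\ref{lem:matrixValues}: the $(v,w)$ entry of $P(X_v)$ equals the $(v,w)$ entry of $P(A)$, hence so does the $(v,w)$ entry of $P(A_{U_i})$.

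The main obstacle I expect is making the intermediate claim in step~(2) fully rigorous — specifically, establishing that $N(v,\PP(H_{U_i})) = N(v,\PP(H_v))$ for every $v \in U_i$, which is the analog of Lemma~\ref{lem:neighb} but with $H_v$ replaced by the larger induced subgraph $H_{U_i}$. The subtlety is that $H_{U_i}$ contains vertices and edges not present in $H_v$, so one must check that squaring and thresholding inside $H_{U_i}$ cannot introduce a new neighbor of $v$ beyond those in $\PP(H_v)$ (this uses that $H_{U_i}$ is itself an induced subgraph of $H$, so its closure $\PP(H_{U_i})$ cannot exceed $\PP(H) = H$ restricted to these vertices), nor destroy an existing one (this uses the no-thresholding hypothesis on $E(H)$). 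Once that neighborhood identity is in hand, the induction is a verbatim repeat of the argument for Lemma~\ref{lem:matrixValues}, and the rest is bookkeeping about which row of $A_{U_i}$ corresponds to which vertex.
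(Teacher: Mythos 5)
Your proposal is correct, but it is worth noting that the paper does not actually write out a proof of Lemma~\ref{lem:matrixValuesMulti}: it simply asserts that the lemma ``follows in a straightforward manner'' as the generalization of Lemma~\ref{lem:matrixValues}, the intended argument being a verbatim rerun of that induction with $v$ replaced by the core $U_i$, $H_v$ by $H_{U_i}$, $X_v$ by $A_{U_i}$, and Lemma~\ref{lem:neighb} by Lemma~\ref{lem:neighbSet} (which supplies exactly the index-set matching needed in the expansion of the matrix square). You instead reduce the multi-core statement to the already-proven single-core one by exploiting the containment $V(H_v)=N(v,H)\subseteq N(U_i,H)=V(H_{U_i})$, proving the intermediate identity $P(A_{U_i})(v,w)=P(X_v)(v,w)$ and then chaining with Lemma~\ref{lem:matrixValues}. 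Both routes rest on the same neighborhood-stability input and the same induction, so the difference is organizational rather than substantive; your version has the small advantage of reusing Lemma~\ref{lem:matrixValues} as a black box, at the cost of one extra comparison. The ``main obstacle'' you flag --- establishing $N(v,\PP(H_{U_i}))=N(v,\PP(H_v))$ for $v\in U_i$ --- is in fact immediate: Lemma~\ref{lem:neighb} gives $N(v,\PP(H_v))=N(v,\PP(G))$ and Lemma~\ref{lem:neighbSet} gives $N(v,\PP(H_{U_i}))=N(v,\PP(G))$, so the two sides coincide without any further work. With that observation your step~(2) is exactly as rigorous as the paper's own proof of Lemma~\ref{lem:matrixValues}, and the argument goes through.
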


In case of a QMD simulation, as outlined in Section~\ref{section_theory}, we assume that the sparsity structure of $P(A)$ can be approximated by the one of the density matrix $D$ from the previous QMD simulation step. For the halos we approximate $H=\PP(G)$ with $G(D)$ as before, and use the current $H$ as a substitute for the halos. This leads to the generalization of the algorithm for computing $P(A)$ in case of multi-vertex cores given at the end of Section~\ref{section_theory}.

\section{Further Details on Experimental Results}
\label{app:experiments}
Raw data for the experiments described in Section~\ref{subsection_assessment_all} can be found in Table~\ref{tab:results}.
The six partitioning schemes we use are listed in column ``methods''. The performance of those methods is measured in four different ways: As a measure of the total matrix multiplication cost of a step of the SP2 algorithm, we report the sum of cubes criterion (\ref{eq:sumOfCubes}) in column $3$ (``sum''). The sum of cubes criterion is also a measure of the computational effort of SP2 since matrix multiplications consume most of the computation time in SP2. As a measure of the variability of block sizes created by our algorithm, we report the smallest and largest block size of any CH-partition (columns $4$ (``min'') and $5$ (``max'')). In the best case scenario, the sizes of all blocks should be roughly equal since otherwise, the nodes or processors will have very unequal computational loads in our parallel implementation of the SP2 algorithm. This is undesirable in practice. The average computation time (in seconds) for each partitioning algorithm is shown in the last column (``time''). As pointed out by a referee, it would be interesting to investigate closer how the discrepancy of the block sizes from the mean influences the performance of SP2: this is left as future research.

\begin{table}
  \caption{Various test systems (first column) evaluated by different partition schemes (second column; number of vertices $n$, edges $m$, and blocks $p$ are given). Results measured with the sum of cubes (\ref{eq:sumOfCubes}) criterion (sum), the sum of size and halo for the smallest CH-partition (min) as well as the biggest CH-partition (max), and the overall runtime (in seconds) are given in columns 3-6.\vspace{-2mm}\label{tab:results}}
  \begin{center}
	\scriptsize
    \begin{tabular}{|l||l|r|r|r|r|}
		\hline
		Test system & method & sum & min & max & time [s]\\
		\hline
		polyethylene dense crystal &METIS & 57,982,058,496 & 1536 & 1536 &  0.267175 \\ 
		n = 18432 & METIS + SA & 51,856,752,364 & 976 & 1536 & 0.347209 \\ 
		m = 4112189 & HMETIS & 7,126,357,377,024 & 3840 & 9984 & 141.426 \\ 
		p = 16 & HMETIS + SA & 1,362,943,612,944 & 2520 & 5814 & 141.79 \\ 
		& KaHIP & 32,614,907,904    &     768    &    1536   &  0.7 \\ 
		& KaHIP + SA &  32,614,907,904  &  768   & 1536  &  0.73 \\
		\hline

		polyethylene sparse crystal &METIS & 24,461,180,928 & 1152 & 1152 &  0.024942 \\ 
		n = 18432 & METIS + SA & 24,461,180,928 & 1152 & 1152 & 0.030508 \\ 
		m = 812343 & HMETIS & 195,689,447,424 & 2304 & 2304 & 55.9726 \\ 
		p = 16 & HMETIS + SA & 170,056,587,295 & 2013 & 2299 & 55.9943 \\ 
		& KaHIP & 24,461,180,928     &   1152    &    1152 &   0.07\\ 
		& KaHIP + SA &  24,461,180,928   &     1152    &    1152 &   0.08 \\
		\hline

		phenyl dendrimer &METIS & 336,049,081 & 150 & 409 &  0.13482 \\ 
		n = 730 & METIS + SA & 146,550,740 & 0 & 382 & 0.14877 \\ 
		m = 31147 & HMETIS & 177,436,462 & 135 & 358 & 1.578 \\ 
		p = 16 & HMETIS + SA & 118,409,940 & 0 & 358 & 1.59436 \\ 
		& KaHIP & 231,550,645        &  55      &   381  &  1.72\\ 
		& KaHIP + SA &  116,248,715    &       0      &   324 &   1.74 \\
		\hline

		polyalanine 189 &METIS & 1,305,573,505,507 & 3358 & 5145 &  0.332091 \\ 
		n = 31941 & METIS + SA & 1,297,206,329,828 & 3362 & 5093 & 0.372463 \\ 
		m = 1879751 & HMETIS & 1,402,737,703,273 & 3762 & 5124 & 418.229 \\ 
		p = 16 & HMETIS + SA & 1,393,115,476,879 & 3765 & 5119 & 418.28 \\ 
		& KaHIP &  1,649,301,823,304       &   12 &       6030  & 18.35 \\ 
		& KaHIP + SA & 1,624,800,725,049     &     12    &    5983 &  18.39  \\
		\hline

		peptide 1aft &METIS & 603,251 & 24 & 41 &  0.004755 \\ 
		n = 384 & METIS + SA & 572,281 & 24 & 41 & 0.005007 \\ 
		m = 1833 & HMETIS & 562,601 & 24 & 40 & 0.820561 \\ 
		p = 16 & HMETIS + SA & 538,345 & 24 & 42 & 0.820771 \\ 
		&  KaHIP & 575,978      &    11      &    44  &  0.08  \\ 
		& KaHIP + SA &  575,978      &    11       &   44  &  0.08 \\
		\hline

		polyethylene chain 1024 &METIS & 8,961,763,376 & 800 & 848 &  0.01513 \\ 
		n = 12288 & METIS + SA & 8,961,763,376 & 800 & 848 & 0.017951 \\ 
		m = 290816 & HMETIS & 8,951,619,584 & 824 & 824 & 27.3297 \\ 
		p = 16 & HMETIS + SA & 8,951,619,584 & 824 & 824 & 27.3332 \\ 
		& KaHIP & 9,037,266,968     &    782     &    875    & 0.73 \\ 
		& KaHIP + SA & 9,000,224,048     &    782      &   872  &  0.74 \\
		\hline

		polyalanine 289 &METIS & 2,816,765,783,803 & 4591 & 6102 &  0.366308 \\ 
		n = 41185 & METIS + SA & 2,816,141,689,603 & 4591 & 6093 & 0.399265 \\ 
		m = 1827256 & HMETIS & 3,694,884,690,563 & 5733 & 6828 & 710.084 \\ 
		p = 16 & HMETIS + SA & 3,681,874,557,307 & 5733 & 6830 & 710.128 \\ 
		& KaHIP & 4,347,865,055,912      &    52  &      8955 &   43.9 \\ 
		& KaHIP + SA & 4,309,969,305,955    &      52    &    8907  & 43.94 \\ \hline

		peptide trp cage &METIS & 35,742,302,607 & 1228 & 1414 &  0.025795 \\ 
		n = 16863 & METIS + SA & 35,740,265,780 & 1228 & 1414 & 0.029837 \\ 
		m = 176300 & HMETIS & 35,428,817,730 & 1214 & 1472 & 31.0506 \\ 
		p = 16 & HMETIS + SA & 35,237,003,004 & 1214 & 1472 & 31.0545 \\ 
		& KaHIP & 43,551,196,287       & 515    &    1898   &  2.81\\ 
		& KaHIP + SA & 43,388,946,192      &   536      &  1896  &  2.81 \\
		\hline

		urea crystal &METIS & 4,126,744,977 & 608 & 708 &  0.047032 \\ 
		n = 3584 & METIS + SA & 4,126,744,977 & 608 & 708 & 0.057645 \\ 
		m = 109067 & HMETIS & 5,913,680,136 & 643 & 811 & 15.2321 \\ 
		p = 16 & HMETIS + SA & 5,194,749,106 & 604 & 785 & 15.2443 \\ 
		& KaHIP &  3,907,671,473 & 622 & 630 &  1.05 \\ 
		& KaHIP + SA & 3,907,671,473     &    622      &   630  &  1.05 \\
		\hline
    \end{tabular}
  \end{center}
\end{table}

\end{document}